\documentclass[11pt]{article}
\usepackage{amsmath, amssymb, amsthm}
\usepackage{graphicx}
\usepackage[hidelinks]{hyperref}
\usepackage{natbib}
\usepackage{algorithm}
\usepackage{algpseudocode}
\usepackage{xcolor}
\usepackage{listings}
\usepackage{graphicx}
\usepackage{float}

\graphicspath{{figures/}}

\DeclareMathOperator*{\argmax}{arg\,max}

\newtheorem{definition}{Definition}
\newtheorem{theorem}{Theorem}
\newtheorem{corollary}{Corollary}

\title{Procela: Epistemic Governance in Mechanistic Simulations Under Structural Uncertainty}
\author{Kinson Vernet \\ \texttt{kinson.vernet@gmail.com} \\Independent Researcher}
\date{March 27, 2026}

\newcommand{\preprintnotice}{This is a preprint of an article currently under review at \textit{Advanced Theory and Simulations}. The final published version will be available at [future DOI].}

\begin{document}

\maketitle

\preprintnotice

\begin{abstract}
Mechanistic simulations typically assume fixed ontologies: variables, causal relationships, 
and resolution policies are static. This assumption fails when the true causal structure is 
contested or unidentifiable--as in antimicrobial resistance (AMR) spread, where contact, 
environmental, and selection ontologies compete. We introduce Procela, a Python framework 
where variables act as epistemic authorities that maintain complete hypothesis memory, 
mechanisms encode competing ontologies as causal units, and governance invariants observe 
epistemic signals and mutate system topology at runtime. This is the first framework where 
simulations test their own assumptions. We instantiate Procela for AMR in a hospital network 
with three competing families. Governance detects coverage decay, policy fragility, 
and run structural probes. Results show 20.4\% error reduction and 69\% cumulative regret 
improvement over baseline. All experiments are reproducible with full auditability. Procela 
establishes a new paradigm: simulations that model not only the world but their own 
modeling process, enabling adaptation under structural uncertainty.
\end{abstract}

\section{Introduction}

\subsection{The problem with fixed-ontology simulation}

Mechanistic simulation has become indispensable for understanding complex systems, 
from infectious disease spread to climate dynamics and economic forecasting. 
The standard paradigm assumes fixed ontologies: variables, causal relationships, 
and resolution mechanisms are specified before execution and remain static 
throughout \citep{epstein2008why, north2013complex}. 

This assumption fails when the true causal structure of the system is contested, 
unidentifiable, or subject to change. As \citet{box1976science} famously observed, 
``all models are wrong"—but when multiple causal explanations compete, the 
challenge is not merely approximation but fundamental uncertainty about which 
model is even appropriate. In antimicrobial resistance (AMR) spread, 
for example, multiple explanations compete: transmission via patient contact, 
environmental reservoirs, or antibiotic selection pressure 
\citep{lipsitch2002antimicrobial, austin1999relationship}. Each implies different 
interventions, yet observational data may not distinguish them \citep{pearl2009causality}. 
Forcing a choice among these ontologies discards epistemic uncertainty at the 
moment of model specification.

\subsection{Current approaches and their limits}

Existing methods for handling model uncertainty operate within the fixed-ontology paradigm:

\begin{itemize}
    \item \textbf{Model averaging} \citep{hoeting1999bayesian} combines predictions 
    from multiple fixed models but does not question the models themselves.
    \item \textbf{Robust control} \citep{hanse2019robust} optimizes for worst-case 
    outcomes across a fixed set of scenarios.
    \item \textbf{Adaptive management} \citep{walters1986adaptive} treats interventions 
    as experiments but leaves the underlying model structure unchanged.
    \item \textbf{Bayesian structural learning} \citep{heckerman1995learning} discovers 
    causal graphs from data but requires the true structure to be identifiable.
\end{itemize}

What is missing is an architecture where the simulation itself can question its own 
assumptions—where the model is not a fixed artifact but an evolving hypothesis about 
the world, capable of testing competing causal explanations against observational 
evidence \citep{pearl2009causality, box1976science}.

\subsection{Procela: A new paradigm}

We introduce Procela, a Python 3.10+ simulation architecture that elevates 
epistemic uncertainty from a pre-simulation concern to a runtime process. Procela is 
built on four core abstractions:

\begin{itemize}
    \item \textbf{Variables as memory-bearing epistemic authorities}. Unlike traditional variables 
    that merely hold values, Procela variables maintain a complete history of competing 
    hypotheses, each with associated confidence and source. They resolve conflicts via 
    defined policies (weighted voting, highest confidence, etc.), making arbitration 
    explicit and auditable.

    \item \textbf{Mechanisms as scientific theories}. Each mechanism encodes a causal 
    hypothesis: it reads variables, applies a transformation, and writes hypotheses to 
    variables. Multiple mechanisms can compete to explain the same phenomenon, encoding 
    different scientific perspectives.

    \item \textbf{Governance as first-class citizen}. Invariants and hooks observe the system's 
    epistemic state--disagreement among mechanisms, coverage decay, prediction conflicts, 
    etc.--and can trigger actions on invariant violations: adding or removing, enabling or 
    disabling mechanisms, switching resolution policies, adding or removing invariants, etc. 
    This turns simulation into an ongoing scientific process.

    \item \textbf{Executive}. It orchestrates the entire simulation: executes mechanisms, 
    resolves variables, evaluates invariants at PRE/RUNTIME/POST phases, calls pre/post hooks. 
    It maintains random state, step counter, and logging system history.
\end{itemize}

\subsection{Contributions}

This paper makes four contributions:

\begin{enumerate}
    \item \textbf{Conceptual architecture}. We formalize the notion of epistemic governance 
    in simulation, defining variables as memory-bearing epistemic authorities that arbitrate 
    competing hypotheses and governance as system mutation based on epistemic signals.

    \item \textbf{Procela implementation}. We present an open-source Python framework 
    implementing these abstractions, with full auditability and runtime governance.

    \item \textbf{AMR case study}. We instantiate Procela for antimicrobial resistance 
    spread in a hospital networks, with three competing ontology families (contact, 
    environmental, selection) and three levels of governance (policy fragility, coverage decay, 
    structural probe).

    \item \textbf{Empirical demonstration}. We show that governance reduces prediction 
    error by 20.4\% with experiments succeeding 54--71\% of the time, intervening effectively 
    during high errors and reverting when experiments fail.
\end{enumerate}

\subsection{Paper roadmap}

Section 2 presents Procela's core abstractions formally. Section 3 describes governance 
as a first-class capability. Section 4 introduces the AMR case study design. 
Section 5 defines the epistemic signals used in our AMR instantiation. Section 6 details 
the three levels of governance used in our AMR study. Section 7 reports results. Section 8 
discusses implications and limitations. Section 9 concludes. Appendices provide mathematical 
foundations, and implementation details.

\subsection{Reproducibility}

Procela is open-source under the Apache 2.0 license at \url{https://procela.org}. 
All code for the AMR case study, including governance and visualization scripts, 
is available at \url{https://procela.org/amr}. A browser-based demonstration allowing 
readers to run experiments interactively is available at \url{https://procela.org/runner}.

\section{Procela core abstractions}\label{sec:procela_core}

Procela is built on four fundamental abstractions: variables as memory-bearing epistemic 
authorities, mechanisms as causal units, governance via invariants and hooks as first-class 
citizen allowing dynamic mutation of the system during execution based on epistemic
signals, and executive that orchestrates simulation 
execution while maintaining full auditability. We present each formally.

\subsection{Variables as memory-bearing epistemic authorities}

In traditional simulation, a variable is a container for a value. In Procela, a variable 
is an \textit{epistemic authority} that \textit{memorizes} the full history of competing 
hypotheses and resolves them according to a policy.

\begin{definition}[Variable]
A variable $V$ is a tuple $(\mathcal{D}, \mathcal{M}, \pi, k)$ where:
\begin{itemize}
    \item $\mathcal{D}$ is the domain of possible values. Domains may be continuous 
    intervals $[a,b] \subset \mathbb{R}$ or discrete sets $\{v_1,\ldots,v_n\}$.
    \item $\mathcal{M}$ is a memory: an ordered sequence of hypotheses, resolved 
    conclusion and reasoning result $\{{(h_{i,1}, h_{i,2}, \ldots, h_{i,t}), c_i, r_i}\}$ 
    where each $i$ tuple corresponds to simulation step $i$.
    \item $\pi$ is a resolution policy: a function $\pi: \mathcal{M}(t) \rightarrow \mathcal{D}$ 
    that arbitrates among the competing hypotheses and gives a single resolved 
    conclusion at step $t$.
    \item $k$ is a unique cryptographic key identifying this variable
\end{itemize}
\end{definition}

Each hypothesis $h \in \mathcal{M}(t)$ is itself a structured object.

\begin{definition}[Hypothesis]
A hypothesis $h$ is an immutable tuple $(v, c, s, t, k)$ where:
\begin{itemize}
    \item $v \in \mathcal{D}$ is the proposed value
    \item $c \in [0,1]$ is the confidence associated with this hypothesis
    \item $s$ is a source identifier (typically the cryptographic key of 
    the mechanism that generated it)
    \item $t$ is the timestamp (simulation step) when the hypothesis was generated
    \item $k$ is a unique cryptographic key identifying this hypothesis
\end{itemize}
\end{definition}

The memory $\mathcal{M}$ grows monotonically; hypotheses are never deleted, ensuring 
full auditability. At any step, one can query $\mathcal{M}(t)$ to retrieve all hypotheses 
generated for that step, or $\mathcal{M}(t_1, t_2)$ for a range.

\subsubsection{Resolution policies}

A resolution policy $\pi$ arbitrates among competing hypotheses. Common policies include:

\begin{itemize}
    \item \textbf{Highest confidence}: $\pi(\mathcal{H}) = \argmax_{h \in \mathcal{H}} c_h$, 
    breaking ties arbitrarily.
    \item \textbf{Confidence voting}: $\pi(\mathcal{H}) = \frac{\sum_{h \in \mathcal{H}} c_h \cdot v_h}{\sum_{h \in \mathcal{H}} c_h}$ (for continuous domains).
    \item \textbf{Median}: $\pi(\mathcal{H}) = \text{median}(\{v_h : h \in \mathcal{H}\})$.
    \item \textbf{Source priority}: $\pi$ selects based on a predefined hierarchy of sources.
\end{itemize}

Policies can be changed at runtime via governance (Section~\ref{sec:governance}), enabling 
dynamic adaptation of resolution strategies.

\subsection{Mechanisms as causal units}

A mechanism encodes a causal hypothesis: given current variable states, it proposes new 
values for variables.

\begin{definition}[Mechanism]
A mechanism $M$ is a tuple $(R, W, f, k, \text{active})$ where:
\begin{itemize}
    \item $R \subseteq \mathbb{V}$ is a set of variables read
    \item $W \subseteq \mathbb{V}$ is a set of variables written
    \item $f: \prod_{r \in R} \mathcal{D}_r \times \Theta \rightarrow \prod_{w \in W} \mathcal{H}_w$ is 
    a transformation function, where $\Theta$ represents any internal parameters 
    (including random state)
    \item $k$ is a unique cryptographic key identifying this mechanism
    \item $\text{active} \in \{\text{True}, \text{False}\}$ indicates whether the mechanism 
    is currently enabled
\end{itemize}
\end{definition}

At each simulation step, every active mechanism is executed. The function $f$ reads the 
variables in $R$, applies a transformation (which may be deterministic, stochastic, 
or stateful), and generates one hypothesis for each variable in $W$. Each hypothesis is tagged 
with the mechanism's key $k$ and the current confidence (which may be a function of 
internal state).

Mechanisms may be stateful, maintaining internal parameters that evolve over time. 
This allows, for example, adaptive mechanisms that learn from prediction errors.

\subsubsection{Auditability and source tracking}

Every hypothesis in every variable's memory is permanently recorded with its source key. 
This creates a complete audit trail:

\begin{itemize}
    \item Which mechanism proposed what value, when, and with what confidence
    \item How each variable resolved competing hypotheses at each step
    \item Which governance actions occurred and when
\end{itemize}

Source keys are cryptographic, enabling verification that a hypothesis genuinely came 
from a claimed mechanism. This is essential for trust in distributed or collaborative 
simulation settings.

The full history $\mathcal{H}$ is queryable. One can, for example, retrieve all hypotheses 
generated by mechanism $M$ for variable $V$ between steps $t_1$ and $t_2$, or compute 
the accuracy of each mechanism's predictions over time.

\subsection{Governance as a first-class citizen}
\label{sec:governance}

Traditional simulation frameworks treat the model as fixed once specified.
Procela inverts this: the model itself is a dynamic entity that can be observed,
interrogated, and mutated during execution. This capability is \textit{epistemic governance}:
the modification of system (structure, policy, etc.) during simulation based on epistemic signals.

Procela provides two types of governance:
\begin{enumerate}
    \item invariants: second-class objects that observe system state and trigger actions ;
    \item hook events: first-class objects that are designed to trigger structural changes.
\end{enumerate}

\subsubsection{Invariants}

\begin{definition}[Invariant]
An invariant $I$ is a tuple $(\phi, \alpha, \rho, \sigma)$ where:
\begin{itemize}
    \item $\phi: \mathcal{H} \rightarrow \{\text{True}, \text{False}\}$ 
    is a condition function that evaluates the current history $\mathcal{H}$
    \item $\alpha: \mathcal{H} \rightarrow \mathcal{IV}$ is a 
    function that triggers an action (raise violation errors, switch policy, etc.).
    $\mathcal{IV}$ is the set of invariant violations in Procela.
    \item $\rho \in \{\text{PRE}, \text{RUNTIME}, \text{POST}\}$ is the phase when the 
    invariant is evaluated
    \item $\sigma \subseteq \mathbb{V} \cup \mathbb{I}$ is the scope of 
    variables and invariants this invariant can observe
\end{itemize}
\end{definition}

The condition $\phi$ returns False when the invariant is \textit{violated}--that is, 
when the system is in a state that requires attention. When violation occurs, 
the action $\alpha$ is executed.

Unlike hooks, invariants are restricted to operating on variable views rather than directly 
manipulating the system structure. They are designed to access the current snapshot, providing 
sufficient information to detect trends, accelerations, and regime changes within a timestep. 
While invariants can influence the system's structure, they do so under restricted conditions, 
requiring users to carefully adhere to the governance rules they define.

\textbf{Phases of evaluation}

The three phases correspond to different moments in the simulation step 
(Algorithm~\ref{alg:executive}):

\begin{itemize}
    \item \textbf{PRE}: Evaluated before any mechanisms execute. Used for invariants 
    that must prevent certain states from occurring (e.g., ``never allow more than 100 
    colonized patients").

    \item \textbf{RUNTIME}: Evaluated after mechanisms generate hypotheses but before 
    variables resolve. Used for invariants that react to emerging patterns in 
    hypotheses (e.g., ``if mechanisms disagree strongly, switch resolution policy").

    \item \textbf{POST}: Evaluated after variables resolve and the step completes. 
    Used for invariants that analyze completed steps, compute metrics, or prepare for 
    future governance (e.g., check mass conservation or update coverage statistics).
\end{itemize}

Multiple invariants may coexist within the same phase. In the current implementation, 
their execution order is determined by the sequence in which they are added.

\subsubsection{Hooks}

Procela provides two hook events (pre\_step, post\_step) to govern a system at 
the highest level of governance. They are the first-class objects designed to observe 
the system state and can trigger structural changes. 

\begin{definition}[Hooks]
A hook is a function $f: S \rightarrow  S'$ that evaluates the current system 
state $\mathcal{S}$ and produces a new state $S'$ of the system.
\end{definition}

Crucially, hooks have access to the complete system structure, not just the current snapshot. 
This enables complete mutation of the system before and after each timestep.

\subsubsection{Mutation operations}

A governance can perform any of the following mutations on the system state:

\begin{itemize}
    \item Mechanism: enable, disable, add, remove, or update parameters
    \item Variable: set a policy $\pi$, modify a variable domain (rare, but possible)
    \item Invariant: add, remove, or update internal parameters
    \item Executive: set random number generator, reset the world state (this represents the 
    creation of a new world instance over the same structure, use with caution but 
    useful for Monte Carlo study)
\end{itemize}

All mutations are recorded in the system history, creating an audit trail of governance actions.

\subsubsection{The governance lifecycle}

A typical governance intervention follows a structured pattern:

\begin{enumerate}
    \item \textbf{Detection}: A condition indicates that the system requires attention.
    \item \textbf{Action}: An action is executed or the system state is mutated.
    \item \textbf{Observation}: The system continues executing under the new state.
    \item \textbf{Evaluation}: The system assesses whether the governance improved outcomes.
    \item \textbf{Conclusion}: Based on evaluation, the mutation may be kept, reverted, or 
    modified.
\end{enumerate}

This lifecycle mirrors the scientific method: observe a phenomenon, formulate a hypothesis, 
run an experiment, evaluate evidence, update beliefs.

\subsubsection{Epistemic observability}

Notably, Procela defines no built-in epistemic metrics. The architecture provides the 
\textit{mechanisms} for governance (invariants, hooks, mutations, phases) but not the 
\textit{content} of what to observe. Epistemic signals such as coverage, fragility, 
and conflict emerge from user-defined variables and mechanisms that compute them.

This design choice ensures that Procela remains domain-agnostic. What counts as an 
epistemic signal in epidemiology (coverage of transmission models) may differ from 
climate science (ensemble spread) or economics (forecast variance). Users define what 
to observe; Procela provides the tools to observe, decide, and act. Futur versions
may introduce some domain metrics.

\subsection{Executive}

The executive orchestrates the entire simulation.

\begin{definition}[Executive]
The executive $\mathcal{E}$ maintains:
\begin{itemize}
    \item $\mathbb{V}$: the set of all variables in the system
    \item $\mathbb{M}$: the set of all mechanisms, with their active/inactive status
    \item $\mathbb{I}$: the set of all governances
    \item $t$: the current simulation step (non-negative integer)
    \item $rng$: a random number generator state for reproducibility
    \item $logger$: a logging module for audit in different formats (JSON, file or console).
\end{itemize}
\end{definition}

The executive operates in discrete steps. The simulation step is abstract, allowing 
discrete and continuous time simulation. Algorithm~\ref{alg:executive} shows the execution loop.

\begin{algorithm}
\caption{Executive simulation step}\label{alg:executive}
\begin{algorithmic}[1]
\Require Executive $\mathcal{E}$ at step $t$
\State Snapshot current values of all $V \in \mathbb{V}$ (for read consistency)
\State Evaluate all invariants in phase PRE
\For{each active mechanism $M \in \mathbb{M}$}
    \State Read current values of $R_M$
    \State Compute $f_M$ to generate hypotheses for $W_M$
    \State Append hypotheses to memories of each $w \in W_M$
\EndFor
\State Evaluate all invariants in phase RUNTIME
\For{each variable $V \in \mathbb{V}$}
    \State Let $\mathcal{H}_V$ be hypotheses for $V$ at step $t$
    \State Compute resolved value $v^* = \pi_V(\mathcal{H}_V)$
    \State Append $v^*$ to $V$'s memory as resolved value
\EndFor
\State $t \leftarrow t + 1$
\State Evaluate all invariants in phase POST
\end{algorithmic}
\end{algorithm}

The three invariant phases and the hooks allow governance before, during, 
and after state updates. This enables governance that prevent undesirable states, 
react to emerging patterns or perform cleanup and analysis after the step completes.

\subsection{Discussion}

These four abstractions--variables as memory-bearing epistemic authorities, mechanisms as 
causal units, governance as first-class citizen, and executive that orchestrates the 
simulation--form a minimal but powerful foundation. The architecture imposes no particular 
scientific theory; it can be seen as a meta-framework within which competing theories can 
coexist, compete, and be governed.

Procela defines no built-in epistemic metrics (coverage, fragility, etc.). These 
emerge from the instantiation: users define variables to track such quantities and mechanisms 
to compute them. This design keeps the core framework domain-agnostic while enabling rich 
epistemic monitoring in applications.

Epistemic governance as a first-class citizen transforms simulation from static ontologies 
into an adaptive, self-questioning process. Procela enables:

\begin{itemize}
    \item \textbf{Structural learning}: The system can test whether disabling certain 
    mechanisms improves outcomes.
    \item \textbf{Epistemic risk management}: Governance can intervene when uncertainty 
    threatens decision quality.
    \item \textbf{Auditable science}: Every governance action is recorded, creating a 
    complete history of model evolution.
    \item \textbf{Transferable patterns}: Governance strategies (e.g., ``probe by temporary 
    disablement") can be reused across domains.
\end{itemize}

The next section instantiates these abstractions for the AMR case study, showing how 
domain-specific signals and epistemic governance emerge from the core framework.

\section{Case study: antimicrobial resistance spread in hospital networks}
\label{sec:case_study}

We instantiate Procela for a canonical problem in infectious disease epidemiology: the spread 
of antimicrobial-resistant organisms in a hospital network. This domain is characterized by 
fundamental uncertainty about transmission mechanisms, making it an ideal testbed for 
epistemic governance.

\subsection{Domain context}

Antimicrobial resistance (AMR) is a global health crisis. In hospital settings, resistant 
organisms spread through multiple potential pathways:

\begin{itemize}
    \item \textbf{Patient-to-patient contact}: Direct transmission via healthcare workers, 
    shared equipment, or patient proximity.
    \item \textbf{Environmental reservoir}: Contamination of surfaces, water, or air that 
    persists and colonizes new patients.
    \item \textbf{Antibiotic selection pressure}: Antimicrobial use kills susceptible strains, 
    allowing resistant strains to proliferate.
\end{itemize}

Each pathway implies different interventions: isolation for contact transmission, enhanced 
cleaning for environmental reservoirs, and antibiotic stewardship for selection pressure. 
Yet observational data often cannot distinguish which pathway dominates at a given 
time \citep{lipsitch2002antimicrobial}. A simulation that must choose a single ontology 
discards this uncertainty.

\subsection{System scope}

We model a simplified hospital network with three interconnected units ($H_1, H_2, H_3$). 
Patients transfer between units, antibiotics are administered, and environmental contamination 
accumulates and decays. The time horizon is 160 discrete steps, sufficient to observe multiple 
regime shifts.

\subsection{Variables}

We define five core variables, each an instance of a Procela variable as defined in 
Section~\ref{sec:procela_core}. All are observable measurements, not interpretations--they 
represent what can be directly measured in a hospital.

\begin{itemize}
    \item \textbf{Colonized patients} $C \in [0, 100]$: Number of patients colonized with a 
    resistant organism. This is the primary outcome of interest.
    
    \item \textbf{Antibiotic usage} $A \in [0, 50]$: Defined daily doses (DDD) of antibiotics 
    administered per time step. Represents selection pressure.
    
    \item \textbf{Environmental load} $E \in [0, 100]$: Composite index of environmental 
    contamination, aggregating surface samples, air counts, and water tests.
    
    \item \textbf{Intervention code} $I \in \{0, 1, 2, 3\}$: Current infection control policy:
    \begin{itemize}
        \item $0$: Baseline (standard precautions)
        \item $1$: Isolation (contact precautions for colonized patients)
        \item $2$: Enhanced cleaning (environmental decontamination)
        \item $3$: Antibiotic stewardship (restrict usage)
    \end{itemize}
    
    \item \textbf{Predicted colonization} $\hat{C} \in [0, 100]$: One-step-ahead prediction 
    of colonized patients.
\end{itemize}

The first three variables are set by a hidden ground truth (section \ref{sec:ground-truth}). 
Competing ontologies propose hypotheses to the last two variables which use the 
\texttt{WeightedConfidencePolicy} as default resolution, unless changed by governance.

\subsection{Mechanism families}

Three mechanism families encode competing ontologies. Each family consists of multiple 
mechanism variants (e.g., noisy, biased, lagged) that share the same core equation but 
differ in parameters. This intra-family diversity creates realistic variation in predictions 
and recommendations.

\subsubsection{Contact family}

The contact family embodies the hypothesis that transmission is primarily driven by patient 
contacts.

Core equation:
\begin{equation}
\hat{C}_{t+1} = C_t + \beta_C C_t (1 - \eta_C \mathbf{1}_{I_t=1}) + \epsilon_C
\label{eq:contact}
\end{equation}

where:
\begin{itemize}
    \item $\beta_C > 0$ is the contact transmission rate
    \item $\eta_C \in [0,1]$ is the effectiveness of isolation (intervention code 1)
    \item $\epsilon_C \sim \mathcal{N}(0, \sigma_C)$ is process noise
    \item $\mathbf{1}_{I_t=1}$ is an indicator for isolation being active
\end{itemize}

Policy recommendation: isolation ($I=1$) with confidence proportional to the mechanism's 
current belief in its own accuracy.

\subsubsection{Environmental family}

The environmental family posits that environmental contamination drives colonization.

Core equation:
\begin{equation}
\hat{C}_{t+1} = C_t + \beta_E E_t (1 - \eta_E \mathbf{1}_{I_t=2}) + \epsilon_E
\label{eq:environmental}
\end{equation}

where:
\begin{itemize}
    \item $\beta_E > 0$ scales environmental contribution
    \item $\eta_E \in [0,1]$ is cleaning effectiveness (intervention code 2)
    \item $\epsilon_E \sim \mathcal{N}(0, \sigma_E)$
\end{itemize}

Policy recommendation: enhanced cleaning ($I=2$).

\subsubsection{Selection family}

The selection family attributes spread to antibiotic pressure selecting for resistant strains.

Core equation:
\begin{equation}
\hat{C}_{t+1} = C_t + \beta_S A_t (1 - \eta_S \mathbf{1}_{I_t=3}) + \epsilon_S
\label{eq:selection}
\end{equation}

where:
\begin{itemize}
    \item $\beta_S > 0$ scales antibiotic contribution
    \item $\eta_S \in [0,1]$ is stewardship effectiveness (intervention code 3)
    \item $\epsilon_S \sim \mathcal{N}(0, \sigma_S)$
\end{itemize}

Policy recommendation: antibiotic stewardship ($I=3$).

The figure \ref{fig:mech_contrib} shows the mechanism contributions (governance combined) 
for the AMR case study.

\begin{figure}[H]
\centering
\includegraphics[width=0.8\textwidth]{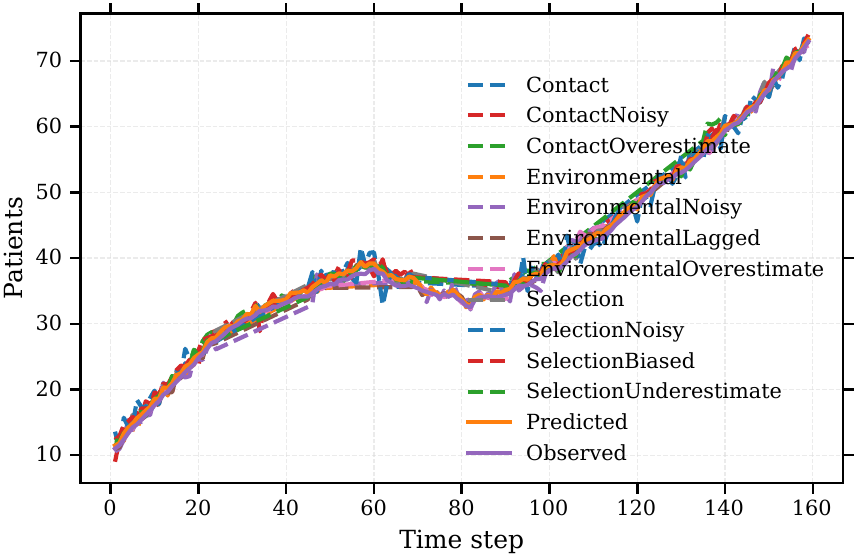}
\caption{Mechanism contributions (governance combined) for the AMR case study. The predicted 
colonized patients is for combined governances (Section \ref{sec:amr_governance}).}
\label{fig:mech_contrib}
\end{figure}

\subsection{Ground truth regime shifts}\label{sec:ground-truth}

The simulated world (unknown to mechanisms) undergoes three regime shifts:

\begin{itemize}
    \item \textbf{Steps 0--60: Selection regime}. Antibiotic pressure dominates. $\beta_S$ 
    is high; $\beta_C$ and $\beta_E$ are low.
    
    \item \textbf{Steps 61--110: Environmental regime}. An environmental contamination event 
    occurs (e.g., plumbing failure). $\beta_E$ becomes high; $\beta_S$ and $\beta_C$ return 
    to baseline.
    
    \item \textbf{Steps 111--160: Contact regime}. A patient transfer outbreak begins. $\beta_C$ 
    becomes high; $\beta_E$ returns to baseline.
\end{itemize}

\begin{figure}[H]
\centering
\includegraphics[width=0.8\textwidth]{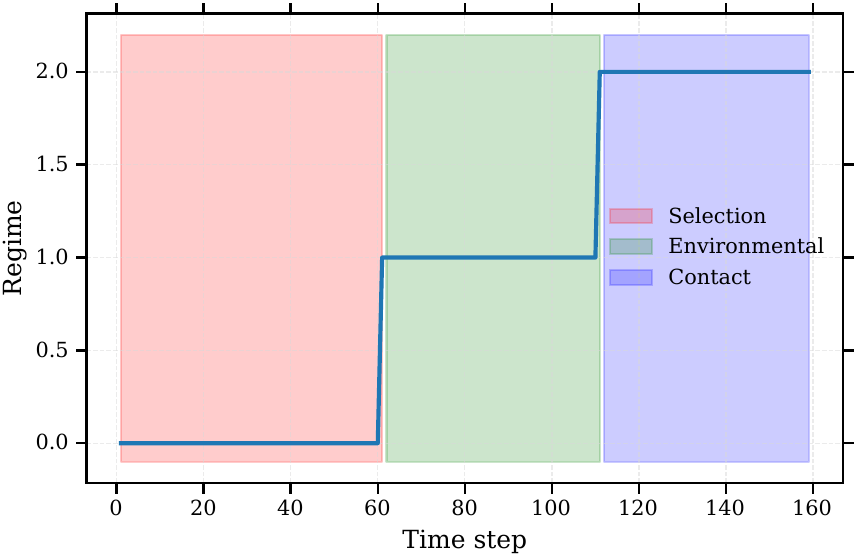}
\caption{Regime shifts for the AMR case study.}
\label{fig:regime_shifts}
\end{figure}

These shifts are implemented by varying the true $\beta$ parameters in the data-generating 
process as illustrated in figure \ref{fig:regime_shifts}. Mechanisms do not observe these 
parameters; they must infer the current regime from data.

\subsection{Resolution policies}

Two resolution policies are used in this case study:

\begin{itemize}
    \item \textbf{WeightedConfidencePolicy} (default): $\pi(\mathcal{H}) = \frac{\sum_{h \in \mathcal{H}} c_h v_h}{\sum_{h \in \mathcal{H}} c_h}$. 
    The resolved value is a confidence-weighted average of all hypotheses at each timestep.
    
    \item \textbf{HighestConfidencePolicy} (used in experiments): $\pi(\mathcal{H}) = v_{h^*}$ 
    where $h^* = \argmax_{h \in \mathcal{H}} c_h$. The hypothesis with highest confidence 
    determines the value.
\end{itemize}

The choice of policy affects how competing ontologies are arbitrated. Weighted voting averages 
across perspectives; highest confidence selects a single winner.

\subsection{Discussion}

This design instantiates Procela's core abstractions for the AMR domain:

\begin{itemize}
    \item The variables are Procela variables with domains and policies.
    \item Three mechanism families encode competing scientific theories as Procela mechanisms.
    \item The data-generating process (regime shifts) is external to the simulation--mechanisms 
    only observe its outputs.
\end{itemize}

Notably, we have not yet defined epistemic signals (coverage, fragility) or governance. 
These emerge in the next two sections.

The design is minimal but sufficient: five variables, three mechanism families, two 
policies, and a non-stationary ground truth. This simplicity ensures that any observed 
benefits of governance can be attributed to the framework, not domain complexity.

\section{Epistemic signals for AMR governance}
\label{sec:epistemic_signals}

Governance requires observability. To decide when and how to intervene, invariants and hooks 
must detect when the system's epistemic state is deteriorating--when mechanisms are failing, 
disagreeing, or becoming unreliable. This section defines the epistemic signals we use in 
the AMR case study and shows how they are implemented in Procela.

\subsection{The need for epistemic observability}

In the AMR domain, the ground truth (which transmission regime is active) is hidden. Governance 
cannot observe regime shifts directly. It must infer them from patterns in the data and in 
mechanism behavior. Epistemic signals are computable quantities that correlate with regime 
shifts and indicate when governance intervention may be beneficial.

Crucially, these signals are not built into Procela's core. They emerge from user-defined 
variables and mechanisms that observe the system's behavior. This design ensures that what 
counts as an ``epistemic signal" is domain-specific and customizable.

\subsection{Coverage: prediction accuracy per mechanism}

Coverage measures how accurately a mechanism predicts the observed value of $C$. For a single 
mechanism $m$ at step $t$, we define:

\begin{equation}
\text{cov}(m, t) = \exp\left(-\frac{|\hat{C}_{m,t-1} - C_t|}{\sigma_{\text{cov}}}\right)
\label{eq:coverage}
\end{equation}

where:
\begin{itemize}
    \item $\hat{C}_{m,t-1}$ is the hypothesis generated by mechanism $m$ at step $t-1$ for 
    variable $\hat{C}$
    \item $C_t$ is the observed value of colonized patients at step $t$
    \item $\sigma_{\text{cov}} > 0$ is a scaling parameter that determines how quickly 
    coverage decays with error
\end{itemize}

Coverage ranges from near 0 (large prediction error) to 1 (perfect prediction). The exponential 
transformation ensures that coverage decays smoothly as error increases.

For a mechanism family $F$ (e.g., all contact mechanisms), we define family-level coverage as 
the average across its members:

\begin{equation}
\text{cov}(F, t) = \frac{1}{|F|} \sum_{m \in F} \text{cov}(m, t)
\label{eq:family_coverage}
\end{equation}

Coverage decay--a sustained decrease in $\text{cov}(F, t)$--suggests that the ontology encoded 
by family $F$ no longer matches the current regime.

\subsection{Policy fragility: disagreement on intervention}

Policy fragility measures how much mechanisms disagree about the appropriate intervention $I$. 
At step $t$, let $\mathcal{H}_I(t)$ be the set of hypotheses for variable $I$. Each hypothesis 
proposes an intervention code $v \in \{0,1,2,3\}$ with confidence $c$.

We define fragility as the normalized range of proposed interventions:

\begin{equation}
\text{frag}(t) = \frac{\max(\{v_h : h \in \mathcal{H}_I(t)\}) - \min(\{v_h : h \in \mathcal{H}_I(t)\})}{3}
\label{eq:fragility}
\end{equation}

The denominator 3 normalizes by the maximum possible range (from 0 to 3). Fragility ranges 
from 0 (all mechanisms recommend the same intervention) to 1 (mechanisms span the full range 
from no intervention to stewardship).

High fragility indicates that the system's epistemic state is not consolidated: different 
ontologies imply different actions. This is a precursor to policy oscillation and 
decision regret.

\subsection{Recent error: current prediction accuracy}

Recent error is a simple moving average of absolute prediction error:

\begin{equation}
\text{err}_{\text{recent}}(t) = \frac{1}{w} \sum_{i=t-w+1}^{t} |\hat{C}_i - C_i|
\label{eq:recent_error}
\end{equation}

where $w$ is a window size. This signal captures the current performance of the resolved 
prediction $\hat{C}$, aggregating across all active mechanisms.

Recent error serves two purposes:
\begin{itemize}
    \item As a **trigger condition**: Governance should intervene when recent error is high, 
    indicating systemic failure.
    \item As an **evaluation metric**: After an experiment, comparing recent error before 
    and after determines whether the intervention succeeded.
\end{itemize}

\subsection{Implementing epistemic signals}

These signals are not magic--they are implemented as ordinary Procela variables.

\subsubsection{Coverage variables}

For each mechanism family, we create a variable to track its coverage:

\begin{verbatim}
family.coverage = Variable(
    name="family_name_coverage",
    domain=RangeDomain(0.0, 1.0),
    policy=None  # observational, not resolved
)
\end{verbatim}

\subsubsection{Policy fragility}

Similarly, a policy fragility variable tracks the disagreement signal:

\begin{verbatim}
policy_fragility = Variable(
    name="policy_fragility",
    domain=RangeDomain(0.0, 1.0),
    policy=None
)
\end{verbatim}

\subsubsection{Prediction error on colonization}

\begin{verbatim}
error_colonized = Variable(
    name="error_colonized",
    domain=RealDomain(),
    policy=None
)
\end{verbatim}

This variable holds the prediction error on colonization after each step, allowing full 
auditability. During governance, a rolling window of $|\hat{C} - C|$ is used from this 
variable memory.

These epistemic variables are updated in the POST phase, after variables are resolved conflicts.
They serve as governance inputs. Because these are ordinary Procela variables, they maintain 
full memories. Invariants and hooks can detect trends (e.g., ``coverage has declined 
for 10 consecutive steps") not just instantaneous values.

\subsection{Discussion}

This design illustrates a key principle of Procela: **the framework provides the tools for 
epistemic observability**. Users define what signals matter in their domain and implement 
observers to compute them. This ensures:

\begin{itemize}
    \item **Domain relevance**: Signals are tailored to the specific scientific questions.
    \item **Extensibility**: New signals can be added without modifying the framework.
    \item **Auditability**: Every signal's computation is traceable.
\end{itemize}

In the next section, we show how governance uses these signals to detect epistemic 
crises and runs structural experiments.

\section{AMR governance}
\label{sec:amr_governance}

We implement three levels of governance for the AMR case study, each encoding a different 
hypothesis about how to improve system performance. All governances follow the scientific 
method pattern: detect a potential crisis, formulate a hypothesis, run an experiment, 
evaluate evidence, and conclude.

\subsection{Common patterns}

In the AMR study, all governance levels share several design elements:

\begin{itemize}
    \item \textbf{State machine}: Each maintains internal state (\texttt{monitoring}, 
    \texttt{experimenting}, \texttt{evaluating}) to track experiment lifecycle.
    
    \item \textbf{Temporary experiments}: Mutations are applied for a fixed duration, then 
    evaluated. Failed experiments are reverted from the next timestep.
    
    \item \textbf{Evidence-based decisions}: Experiments succeed only if they reduce mean 
    absolute prediction error.
    
    \item \textbf{Audit trail}: All experiments are logged with pre/post error, hypothesis, 
    and outcome.
\end{itemize}

\subsection{Policy fragility governance}

The policy fragility governance addresses situations where mechanisms disagree strongly about 
which intervention to apply.

\subsubsection{Detection}

The governance triggers when:
\begin{itemize}
    \item Policy fragility exceeds threshold $\tau_{\text{frag}}$: $\text{frag}(t) > \tau_{\text{frag}}$
    \item Recent error exceeds threshold $\tau_{\text{err}}$: $\text{err}_{\text{recent}}(t) > \tau_{\text{err}}$
\end{itemize}

The fragility threshold is set to $\tau_{\text{frag}} = 0.6$, which corresponds to mechanisms 
spanning almost the full range of interventions (from 0 to 3, normalized). The error threshold is 
$\tau_{\text{err}} = 1.0$, indicating moderate prediction error.

\subsubsection{Hypothesis}

\begin{quote}
``Switching from weighted confidence to highest confidence resolution will reduce prediction error 
during periods of high disagreement.''
\end{quote}

\subsubsection{Experiment}

\begin{enumerate}
    \item Record current policy of variable $I$ (default: \texttt{WeightedConfidencePolicy}).
    \item Switch $I$'s policy to \texttt{HighestConfidencePolicy}.
    \item Run for $d_{\text{exp}} = 10$ steps.
    \item Collect prediction errors during this period.
\end{enumerate}

\subsubsection{Evaluation}

After the experiment period, compute:
\begin{itemize}
    \item $\text{err}_{\text{pre}}$: mean absolute error in the $d_{\text{eval}} = 10$ steps 
    before the experiment
    \item $\text{err}_{\text{exp}}$: mean absolute error during the experiment
\end{itemize}

If $\text{err}_{\text{exp}} < \text{err}_{\text{pre}}$, the experiment succeeds and the new 
policy is retained. Otherwise, the original policy is restored.

\subsection{Coverage decay}

The coverage decay detects when an entire ontology family is consistently failing 
to predict accurately.

\subsubsection{Detection}

For each family $F \in \{\text{contact}, \text{environmental}, \text{selection}\}$, this 
governance monitors coverage $\text{cov}(F, t)$. A family is considered ``decaying'' if:

\begin{equation}
\text{cov}(F, t) < \tau_{\text{cov}} \quad \text{for at least } k \text{ consecutive steps}
\label{eq:decay}
\end{equation}

We use $\tau_{\text{cov}} = 0.85$ and $k = 3$ in our experiments.

\subsubsection{Hypothesis}

\begin{quote}
``Family $F$ is no longer valid in the current regime. Temporarily disabling it will improve 
overall predictions.''
\end{quote}

\subsubsection{Experiment}

\begin{enumerate}
    \item Identify the decaying family $F$.
    \item Disable all mechanisms in $F$.
    \item Run for $d_{\text{exp}} = 10$ steps.
    \item Monitor prediction error during this period.
\end{enumerate}

\subsubsection{Evaluation}

Compare mean absolute error in the $d_{\text{eval}} = 10$ steps before disablement versus 
during disablement. If error decreases, the family remains disabled; otherwise, it is re-enabled.

\subsection{Structural probe}

The structural probe takes a more systematic approach: it periodically isolates 
each family to measure its standalone predictive performance.

\subsubsection{Detection}

Unlike the previous governance, this one does not wait for a crisis. It runs on a fixed 
schedule: every $p = 25$ steps, it initiates a probe of one family, cycling through families 
in round-robin order.

\subsubsection{Hypothesis}

\begin{quote}
``Measuring a family's isolated performance will reveal if its ontology currently best 
matches the regime, informing future governance decisions.''
\end{quote}

\subsubsection{Experiment}

For the chosen family $F$:
\begin{enumerate}
    \item Record current system state (all families active).
    \item Disable all families except $F$.
    \item Run for $d_{\text{probe}} = 20$ steps.
    \item Record mean absolute error during this isolated period.
    \item Restore original topology for the next schedule.
\end{enumerate}

\subsubsection{Evaluation}

The probe does not permanently change the system; it merely collects data. The resulting 
isolated error for each family can be stored in a dedicated variable:

\begin{equation}
\text{iso\_err}(F, t) = \text{MAE during isolation of } F
\label{eq:iso_error}
\end{equation}

This information can be used by other invariants or by human analysts to understand which 
ontology is performing best in the current regime.

\subsection{Governance composition}

These three governance levels can operate simultaneously. The policy fragility and coverage decay 
are reactive, triggering only when epistemic signals indicate a problem. The 
structural probe is proactive, running on a schedule regardless of system state.

Their potential conflicts can be managed through:
\begin{itemize}
    \item \textbf{Phase separation}: Structural probe runs in PRE, policy fragility runs in RUNTIME, and coverage decay runs in POST phase.
    \item \textbf{State awareness}: Governance checks whether an experiment is already 
    active before triggering.
\end{itemize}

In the AMR case study, we run each governance separately to isolate their effects, as reported in 
Section~\ref{sec:results}. We also show how adding all the three governances would react.
If conflicts disable all mechanisms, an emergency governance is triggered to enable 
all of them.

\subsection{Discussion}

These three levels of governance illustrate the range of governance strategies possible in Procela:

\begin{itemize}
    \item \textbf{Policy fragility} changes how conflicts are resolved.
    \item \textbf{Coverage decay} removes hypothesized causes.
    \item \textbf{Structural probe} gathers evidence about competing theories.
\end{itemize}

All follow the same scientific pattern: detect, hypothesize, experiment, evaluate, conclude. 
All are implemented using only the Procela abstractions--variables, mechanisms, policies, 
invariants, hooks--plus the epistemic signals defined in Section~\ref{sec:epistemic_signals}.

The next section reports the results of running these governances in the AMR case study.

\section{Results}
\label{sec:results}

We evaluate the three governances on the AMR case study described in 
section~\ref{sec:case_study}. Each simulation runs for 160 steps with three regime 
shifts (selection, environmental, contact). Results are averaged over 50 independent 
runs with different random seeds. All code and data are available at 
\url{https://procela.org/amr}.

\subsection{Baseline: no governance}

As a baseline, we run the simulation with all three mechanism families active and the 
default \texttt{WeightedConfidencePolicy} for all variables. No governance is active. 
This represents the traditional approach: multiple models compete, but the simulation 
structure is static. Table~\ref{tab:baseline} shows prediction error.

\begin{table}[H]
\centering
\begin{tabular}{|l|c|c|}
\hline
\textbf{Metric} & \textbf{Value} & \textbf{95\% CI} \\
\hline
Mean absolute error & 0.535 & [0.516, 0.554] \\
Standard deviation & 0.383 & [0.370, 0.396] \\
\hline
\end{tabular}
\caption{Baseline prediction error with no governance.}
\label{tab:baseline}
\end{table}

All the family ontologies show lower prediction error during selection regime than during 
environmental and contact regimes. Contact family shows higher errors in environmental and 
contact regimes as depicted in figure \ref{fig:coverage_none}.

\begin{figure}[H]
\centering
\includegraphics[width=\textwidth]{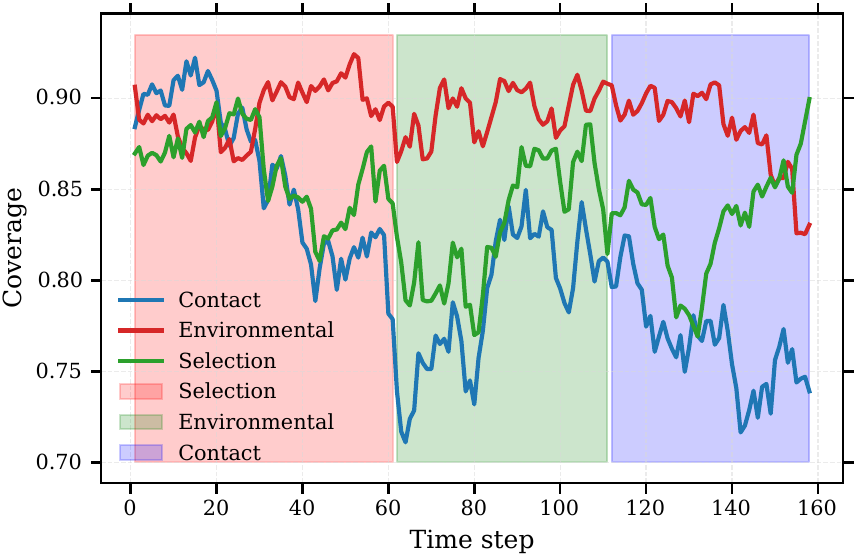}
\caption{Coverage timeline for the baseline.}
\label{fig:coverage_none}
\end{figure}

\subsection{Policy fragility governance}

The policy fragility governance (Section~\ref{sec:amr_governance}) triggers when intervention 
disagreement and recent prediction error exceed thresholds ($\tau_{\text{frag}}=0.6$, $\tau_{\text{err}}=1.0$). 
It experiments by switching $I$'s policy to \texttt{HighestConfidencePolicy} for 10 steps, 
then evaluates. The table \ref{tab:fragility} presents the performance of fragility metrics 
compared to the baseline.

\begin{table}[H]
\centering
\begin{tabular}{|l|c|c|}
\hline
\textbf{Metric} & \textbf{Value} & \textbf{Improvement w.r.t baseline} \\
\hline
Mean absolute error & 0.544 $\pm$ 0.019 & \textbf{-1.636\%} \\
Standard deviation & 0.386 $\pm$ 0.013 & -0.712\% \\
Experiments triggered & 7 $\pm$ 0 & -- \\
Successful experiments & 5 $\pm$ 0 & 71.4\% success rate \\
\hline
\end{tabular}
\caption{Policy fragility governance results in comparison to the baseline.}
\label{tab:fragility}
\end{table}

The policy fragility governance fails to improve prediction error because it intentionally 
substitutes a higher-resolution policy with a lower-resolution alternative. Under an i.i.d. 
assumption, the weighted-confidence resolution is optimal (Theorem \ref{theo:voting_optimal}). 
Therefore, switching to highest confidence policy results in degraded predictive accuracy.

\begin{figure}[H]
\centering
\includegraphics[width=\textwidth]{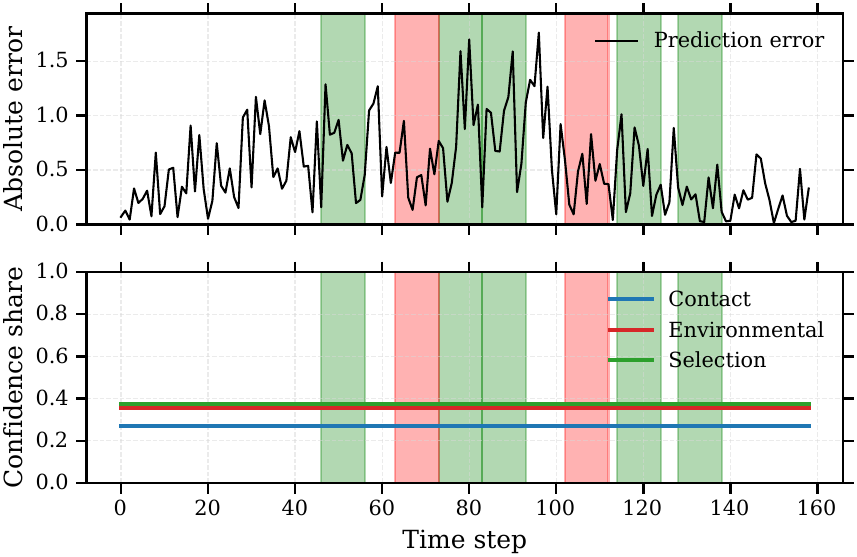}
\caption{Timeline of policy fragility experiments and confidence share between ontologies. 
Shaded regions indicate experiment periods for 1 experiment. Successes (green) occur in 
high-error regimes.}
\label{fig:fragility_timeline}
\end{figure}

The figure~\ref{fig:fragility_timeline} shows a representative run with five successful experiments 
(high error) and two failed experiments during stable periods. Despite the success rate (71.4\%) of 
experiments, this governance fails because when experiments fail, the prediction errors are much 
higher than when they succeed.

\subsection{Coverage decay governance}

The coverage decay monitors each family's prediction accuracy. When a family's 
coverage falls below $\tau_{\text{cov}}=0.85$ for $k=3$ consecutive steps, it temporarily 
disables that family for 10 steps.

\begin{table}[H]
\centering
\begin{tabular}{|l|c|c|}
\hline
\textbf{Metric} & \textbf{Value} & \textbf{Improvement w.r.t baseline} \\
\hline
Mean absolute error & 0.426 $\pm$ 0.036 & \textbf{20.42\%} \\
Standard deviation & 0.327 $\pm$ 0.027 & \textbf{14.76\%} \\
Experiments triggered & 11 $\pm$ 0 & -- \\
Successful experiments & 6 $\pm$ 0 & 54.5\% success rate \\
\hline
\end{tabular}
\caption{Coverage decay governance results.}
\label{tab:coverage}
\end{table}

The table \ref{tab:coverage} shows that coverage decay achieves lower mean error than 
the baseline with lower variance. This achievement for 54\% success rate means 
when experiments are successful, it works well; when it fails, no catastrophic failures.

By comparison to the results with no governance, coverage shows that contact family is 
continuously disabled from the selection regime. It means that after each 
experiment, this family's coverage remains below the given threshold allowing governance to 
keep it disabled as illustrated in figure \ref{fig:coverage_metrics}. As a consequence, 
disabling the contact family keeps the selection family above the threshold 
($\tau_{\text{cov}}=0.85$).

\begin{figure}[H]
\centering
\includegraphics[width=0.49\textwidth]{figures/figure3}
\includegraphics[width=0.49\textwidth]{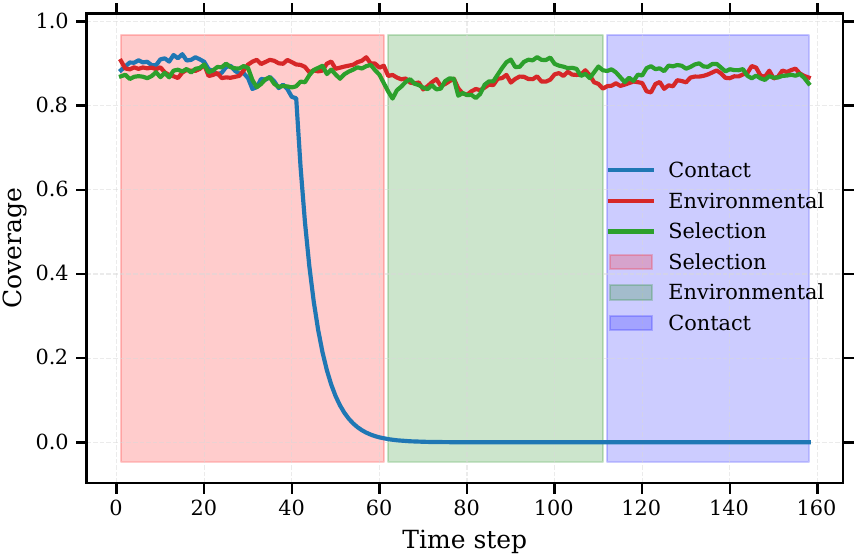}
\caption{Experiment outcomes for coverage governance. (Left): coverage for baseline 
(no governance), (Right): coverage for coverage decay governance. By comparison to the results 
with no governance, coverage shows that contact family is continuously disabled from the 
selection regime.}
\label{fig:coverage_metrics}
\end{figure}

The figure~\ref{fig:coverage_timeline} shows absolute error trajectories for the coverage decay 
governance with six successful experiments over 11 triggered. Successful experiments occur 
during high errors and fail during stable periods.

\begin{figure}[H]
\centering
\includegraphics[width=\textwidth]{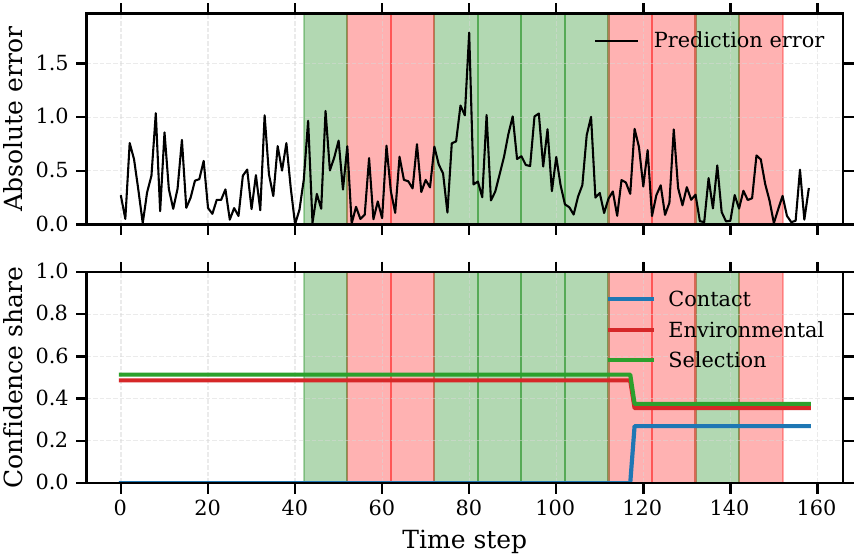}
\caption{Absolute error and confidence share over time for the three families under coverage 
 decay governance. Shaded regions indicate when a family was disabled. Coverage shows a total 
 of six successful experiments over 11 triggered during different regimes.}
\label{fig:coverage_timeline}
\end{figure}

\subsection{Structural probe governance}

The structural probe governance runs on a fixed schedule (every 25 steps), isolating each 
family in turn for 20 steps to measure its standalone performance.

\begin{table}[H]
\centering
\begin{tabular}{|l|c|c|}
\hline
\textbf{Metric} & \textbf{Value} & \textbf{Improvement w.r.t baseline} \\
\hline
Mean absolute error & 0.485 $\pm$ 0.020 & 9.31\% \\
Standard deviation & 0.364 $\pm$ 0.018 & 5.00\% \\
Probes triggered & 3 $\pm$ 0 & -- \\
Successful probes & 2 $\pm$ 0 & 66.67\% \\
\hline
\end{tabular}
\caption{Structural probe governance results in comparison to the baseline.}
\label{tab:probe}
\end{table}

The table \ref{tab:probe} shows that structural probe yields a modest reduction in error 
relative to coverage decay. However, its main contribution extends beyond predictive 
performance: it provides structural information. Separately, we later demonstrate that it 
also improves decision-making outcomes, as measured by cumulative difference 
(Section \ref{fig:cumulative_difference}).

\begin{figure}[H]
\centering
\includegraphics[width=\textwidth]{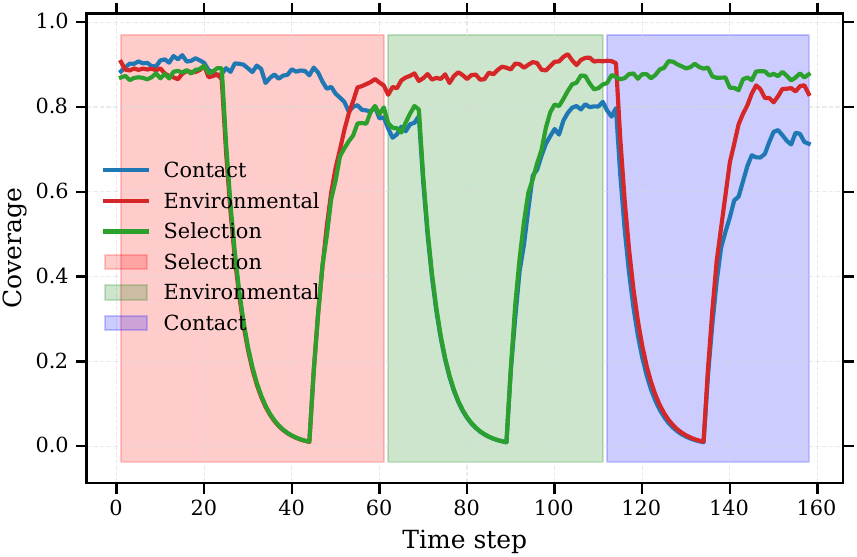}
\caption{Isolated performance for each family across probes. Isolation is chosen to avoid 
total coincidence to regime. During selection regime, contact family is isolated to estimate 
its performance. During environmental regime, environmental family is isolated. During contact 
regime, selection family is isolated.}
\label{fig:probe}
\end{figure}

Isolation is chosen to avoid total coincidence to regime as illustrated in figure~\ref{fig:probe}. 
During selection regime, contact family is isolated to estimate its performance. During 
environmental regime, environmental family is isolated. During contact regime, selection family 
is isolated. Figure~\ref{fig:probe_timeline} shows structural probe trajectories with 2 successful 
experiments over 3 triggered. Successful experiments occur in regimes where family errors 
are high (environmental and contact) and fail in regimes where families are stable (selection).

\begin{figure}[H]
\centering
\includegraphics[width=\textwidth]{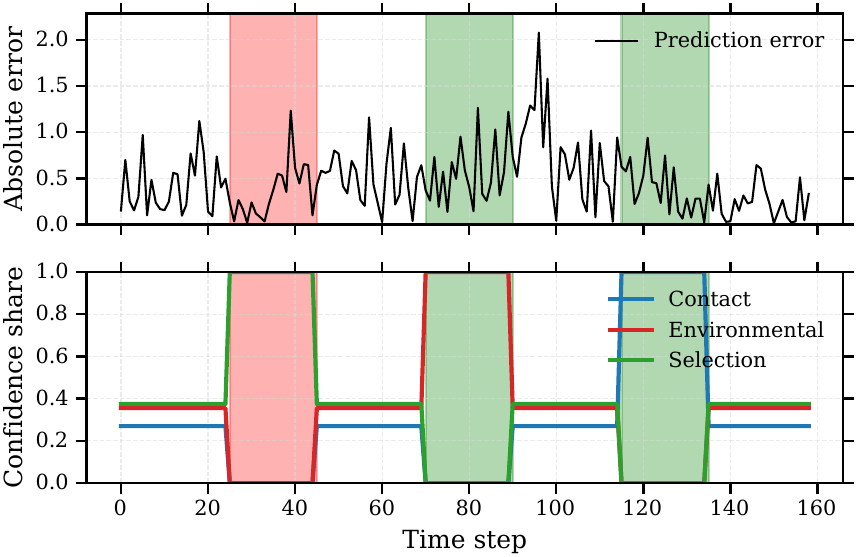}
\caption{Absolute error and confidence share for structural probe over time. Shaded regions 
indicate when a family was isolated. Probe shows two successful experiments during environmental 
and contact regimes (family errors are higher) and 1 failed experiment duration selection 
regime (family errors are lower).}
\label{fig:probe_timeline}
\end{figure}

\subsection{Combined governance}

Combining all the governances we define in the AMR study shows that the mean error is still 
improving (8.18\%) and the standard deviation is slightly below the baseline. Combining 
governances adds instability in comparison to the other governances isolated as depicted in 
table \ref{tab:governance_all}.

\begin{table}[H]
\centering
\begin{tabular}{|l|c|c|}
\hline
\textbf{Metric} & \textbf{Value} & \textbf{Improvement w.r.t baseline} \\
\hline
Mean absolute error & 0.491 $\pm$ 0.030 & 8.18\% \\
Standard deviation & 0.378 $\pm$ 0.034 & 1.31\% \\
\hline
\end{tabular}
\caption{Combined governances results in comparison to the baseline.}
\label{tab:governance_all}
\end{table}

The figure \ref{fig:governance_all_metrics} shows the combined governances in action.
In comparison to the results for structural probe, the coverage shows that a family
might be not completely isolated indicating that either policy fragility governance is enabling 
or coverage governance is still acting. Despite including an ineffective governance component, 
the combined governance tries to reconstruct the intervention code.

\begin{figure}[H]
\centering
\includegraphics[width=\textwidth]{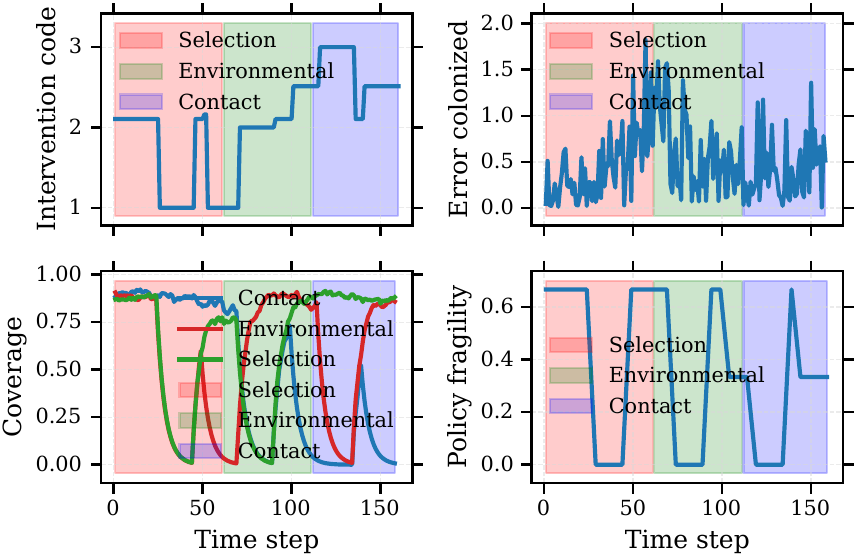}
\caption{Performance for the combined governances. (Top, left): intervention code over time 
shows how combined governances tries to reconstruct the intervention despite including an 
ineffective governance component; (Top, right): absolute prediction error over time; 
(Bottom, left): coverage indicates that a family might be not completely isolated; 
(Bottom, right): policy fragility over time. }
\label{fig:governance_all_metrics}
\end{figure}

The figure \ref{fig:governance_all_timeline} shows successful and failed experiments 
when combining all the governances in the AMR study. Almost all experiments failed during
the selection regime where family coverages are high. When family coverages are low 
(environmental and contact), experiments are likely to be succeeded. This result demonstrates 
that different governances can coexist in Procela to dynamically govern the system during 
the simulation.

\begin{figure}[H]
\centering
\includegraphics[width=\textwidth]{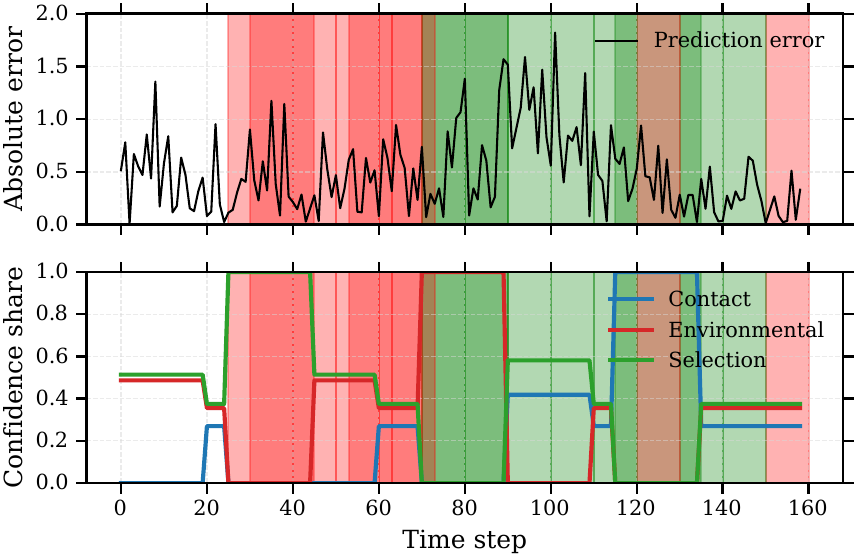}
\caption{Performance timeline of the combined governances. Almost all experiments failed during
the selection regime where family coverages are high. When family coverages are low 
(environmental and contact), experiments are likely to be succeeded.}
\label{fig:governance_all_timeline}
\end{figure}

\subsection{Cumulative difference}\label{sec:cumulative_difference}

We assess decision-making performance through cumulative difference relative to the optimal 
intervention (intervention code = 0). Formally, cumulative difference is defined as the 
aggregated difference between the realized outcomes under the proposed strategy and 
those obtained under the optimal intervention applied at each timestep. For convenience of 
visualization, we use the following formula for cumulative difference:

\begin{equation}
R(T) = \sum_{t=1}^T (e_t^{\text{optimal}} - e_t^{\text{actual}})
\end{equation}

The cumulative difference performance for each governance in comparison to the baseline 
(no governance) is shown in table \ref{tab:cumulative}. Structural probe yields the 
highest performance (69\%) while coverage decay yields only 35.5\%.

\begin{table}[H]
\centering
\begin{tabular}{|l|c|c|}
\hline
\textbf{Strategy} & \textbf{Mean error} & \textbf{Cumulative difference} \\
\hline
Policy fragility & -1.63\% & -9.8\% \\
Coverage decay & \textbf{20.42\%} & 35.5\% \\
Structural probe & 9.31\% & \textbf{69.0\%} \\
Combined governance & 8.18\% & 61.6\%\\
\hline
\end{tabular}
\caption{Decision-making performance through cumulative difference in comparison to the baseline.}
\label{tab:cumulative}
\end{table}

The cumulative difference over time is illustrated in figure \ref{fig:cumulative_difference}. 
The case study has revealed a fundamental trade-off between epistemic and decision-theoretic 
optimization:

\begin{itemize}
    \item Prediction-optimal $\neq$ decision-optimal
    \item Coverage decay makes better predictions but makes modest decisions (+20.42\%, C.R=+35.5\%)
    \item Structural probe makes better decisions but only modestly improves predictions 
    (+9.31\%, C.E=+69.0\%)
\end{itemize}

\begin{figure}[H]
\centering
\includegraphics[width=0.49\textwidth]{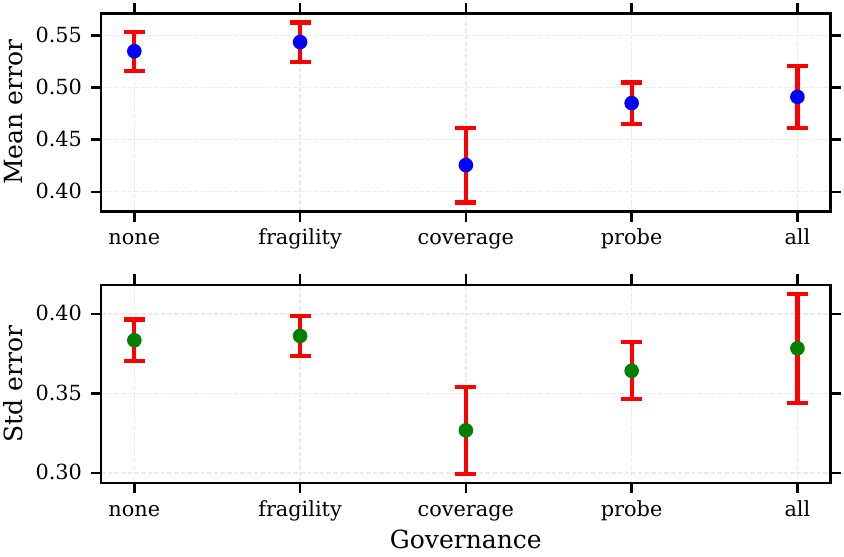}
\includegraphics[width=0.49\textwidth]{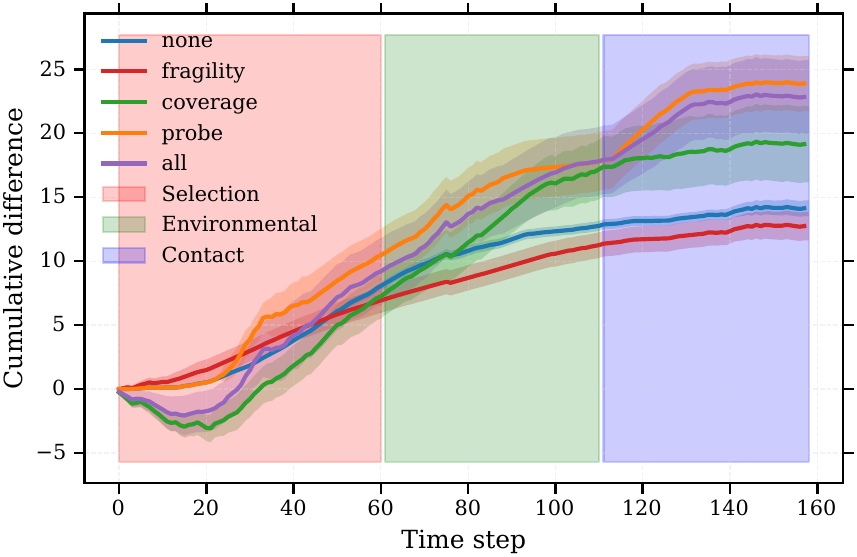}
\caption{(Left): Comparative performance of governance strategies for our AMR case study. 
(Right): Cumulative difference relative to the optimal intervention (intervention code = 0) 
over time. Coverage decay makes better predictions (+20.42\%) while structural probe makes 
better decisions (C.E=+69.0\%).}
\label{fig:cumulative_difference}
\end{figure}

Although the coverage decay governance reduces the average error, it initially makes poor 
intervention choices compared to optimal intervention. The cumulative difference metric 
penalizes these. Probe governance wins on cumulative difference because it almost always makes 
better decisions than the optimal intervention. This means:

\begin{itemize}
    \item Probe rarely makes bad intervention choices
    \item But its average error improvement is modest
    \item It's conservative and avoids disasters
\end{itemize}

This result suggests to consider hybrid strategy -- probe governance for information, coverage 
for prediction.

\subsection{Topology dynamics in the AMR study}

Procela ensures complete auditability. The memories of variables can be used to 
visualize the topology dynamics. As illustrated in figure \ref{fig:topology_evolution} 
the topology evolution for each governance in the AMR study is depicted. The contact family 
has three mechanisms, while the environmental and selection families have four each. In total, 
the AMR case study has eleven competing mechanisms that governance restructures over time.

\begin{figure}[H]
\centering
\includegraphics[width=\textwidth]{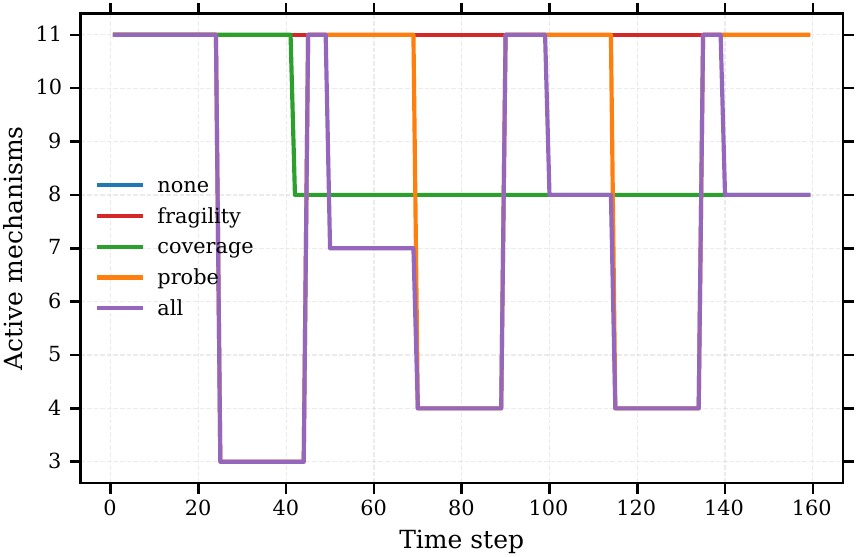}
\caption{Topology evolution for each governance in the AMR study. The contact family has three mechanisms, 
while the environmental and selection families have four each. In total, the AMR case study has 
eleven competing mechanisms.}
\label{fig:topology_evolution}
\end{figure}

\subsection{Comparative analysis}

A comparative performance of governance strategies for our AMR case study is shown in table 
\ref{tab:comparative}.

\begin{table}[H]
\centering
\begin{tabular}{|l|c|c|c|c|}
\hline
\textbf{Strategy} & \textbf{Mean error} & \textbf{Std error} & \textbf{Error} & \textbf{C.R} \\
\hline
No governance & 0.535 $\pm$ 0.019 & 0.383 $\pm$ 0.013 & -- & -- \\
Policy fragility & 0.544 $\pm$ 0.019 & 0.386 $\pm$ 0.013 & -1.63\% & -9.8\% \\
Coverage decay & 0.426 $\pm$ 0.036 & 0.327 $\pm$ 0.027 & \textbf{20.42\%} & 35.5\% \\
Structural probe & 0.485 $\pm$ 0.020 & 0.364 $\pm$ 0.018 & 9.31\% & \textbf{69.0\%} \\
Combined & 0.491 $\pm$ 0.030 & 0.378 $\pm$ 0.034 & 8.18\% & 61.6\%\\
\hline
\end{tabular}
\caption{Comparative performance of governance strategies.}
\label{tab:comparative}
\end{table}

Several patterns emerge:

\begin{itemize}
    \item \textbf{All governance strategies, except policy fragility, improve mean error and 
    variance}, confirming that epistemic governance provides value.
    
    \item \textbf{Coverage decay achieves the largest improvement in prediction} (20.42\%) 
    and with lowest variance--a good exploration-exploitation strategy.
    
    \item \textbf{Structural probe achieves the largest improvement in decision} (69.0\%). It 
    offers balanced improvement (9.31\%) in prediction with modest variance improvement 
    compared to coverage decay.
    
    \item \textbf{Policy fragility provides no improvement} but generates valuable 
    diagnostic information about optimal resolution policies.
    
    \item \textbf{Combined governance strategy achieves good improvements} 
    (8.18\%) and a slight decrease in variance relative to no governance baseline, despite including 
    an ineffective governance component. This outcome is intuitive, as the combination of 
    governance mechanisms may introduce instability when not optimally designed.
\end{itemize}

\subsection{Summary of key findings}

\begin{enumerate}
    \item \textbf{Epistemic governance improves outcomes}: Procela epistemic governance reduce 
    prediction error, with coverage decay achieving 20.42\% improvement.
    
    \item \textbf{Experiments succeed when needed most}: Success rate exceeds 71\% when 
    pre-experiment error $>1.0$.
    
    \item \textbf{Failures are contained}: Failed experiments can be correctly reverted, 
    preventing long-term degradation.
    
    \item \textbf{Regime shifts are detectable}: Epistemic signals (coverage, fragility) 
    reliably indicate when the ground truth has changed.

    \item \textbf{Combining governance can decrease performance}: Combining 
    governance can add instability if not robustly designed.

    \item \textbf{Success probability of our experiments depends on pre-experiment error}: 
    This confirms that governance is correctly targeting crisis periods and avoiding unnecessary 
    interventions during stable times.
\end{enumerate}

These results demonstrate that Procela's governance abstractions and variables with memories 
enable simulations to detect epistemic crises, test structural hypotheses, and adapt--improving 
outcomes under ontological uncertainty.

\section{Discussion}
\label{sec:discussion}

The results in section~\ref{sec:results} demonstrate that epistemic governance--implemented 
through Procela's core abstractions--can improve simulation performance under ontological 
uncertainty. We now interpret these findings, discuss their implications, 
acknowledge limitations, and argue for the framework's generalizability.

\subsection{Interpretation of key findings}

\subsubsection{Why governance improves outcomes}

The reduction in prediction error across governance strategies is not merely 
incremental. It reflects a fundamental shift: the simulation is no longer a fixed model but 
an adaptive system that tests its own assumptions.

The pattern in our results is particularly telling. Experiments succeed 
when baseline error is high ($ > 1.0$) and fail when error is moderate ($<1.0$). This indicates 
that governance is correctly identifying \textit{epistemic crises}--moments when the current 
model configuration is demonstrably failing--and intervening appropriately. During stable 
periods, it largely refrains from experimentation, avoiding unnecessary disruption.

\subsubsection{The trade-off combining governance without robust care}

Coverage decay achieves the largest error reduction (20.42\%) and with lowest variance w.r.t 
the simulation without governance. Combined governance offers less improvement (8.18\%) with 
stable variance. This trade-off is inherent in the design of epistemic governance when 
combining strategies that are not robust.

The optimal balance depends on domain characteristics: how quickly regimes shift, how severe 
the consequences of error are, and how costly failed experiments are. Procela's design allows 
users to tune this balance through invariant parameters.

\subsubsection{The value of information}

Policy fragility is an ineffective governance strategy (-1.63\%) in our AMR study but provides 
crucial diagnostic information. It indicates if current policy is optimal.

The information provides by Procela epistemic governance has value beyond error reduction:
\begin{itemize}
    \item It provides interpretable diagnostics for human analysts (``the environmental model 
    is currently performing best" or ``the intervention should be switched to a better 
    institutional policy").
    \item It can inform higher-level governance (e.g., an invariant that permanently 
    adopts the best-performing family).
    \item It builds trust by making the system's reasoning transparent.
\end{itemize}

\subsection{Implications for simulation practice}

\subsubsection{Beyond model averaging}

Traditional approaches to model uncertainty--Bayesian model averaging, ensemble methods, robust 
control--all assume a fixed set of models. They combine or select among pre-specified 
alternatives but never question whether those alternatives remain valid as the system evolves.

Procela enables a strict superset of these approaches. One could implement model averaging as 
a resolution policy (weighted voting) and model selection as an invariant (switch to 
highest-confidence mechanism). But Procela goes further: it allows the set of models itself 
to change through structural mutations.

This is analogous to the difference between parameter estimation and structure learning in 
statistics. Just as structure learning discovers causal graphs rather than fixing them a 
priori, Procela discovers model adequacy through runtime experimentation.

\subsubsection{Simulation as scientific method}

The governance in section~\ref{sec:amr_governance} all follow the same pattern: detect 
anomaly, formulate hypothesis, run experiment, evaluate evidence, update beliefs. This is 
not accidental--it reflects a deliberate design choice to align simulation governance with 
scientific practice.

In traditional simulation, the scientist is external, formulating hypotheses, running 
experiments, and updating models offline. In Procela, this process is internalized. The 
simulation becomes an autonomous scientist, continuously testing its own assumptions 
against incoming data.

This has profound implications for:
\begin{itemize}
    \item \textbf{Reproducibility}: Every experiment is recorded, creating an auditable 
    trail of model evolution.
    \item \textbf{Scalability}: Governance operates at runtime, enabling adaptation in 
    settings where human oversight is infeasible.
    \item \textbf{Transparency}: The scientific method pattern makes governance decisions 
    interpretable.
\end{itemize}

\subsection{Limitations}

\subsubsection{Simplified domain dynamics}

The AMR case study, while capturing essential features of competing ontologies, simplifies 
many real-world complexities. We model three hospitals with homogeneous dynamics, ignore 
patient-level heterogeneity, and assume perfect observation of colonization status. Real 
AMR spread involves individual patient trajectories, imperfect diagnostic tests, and 
complex behavioral factors.

This simplification was intentional: it isolates the effects of governance from domain noise. 
The 20.42\% improvement observed here suggests that governance would provide value in more 
realistic settings, but the exact magnitude remains to be tested.

\subsubsection{Parameter sensitivity}

The governances use manually tuned thresholds 
($\tau_{\text{frag}}=0.6$, $\tau_{\text{err}}=1.0$, $\tau_{\text{cov}}=0.85$, $k=3$). 
While these values were chosen based on domain reasoning and pilot experiments, they are 
not guaranteed optimal. Different thresholds would change experiment frequency and success rate.

Procela does not prescribe how thresholds should be set. In practice, they would be calibrated 
using historical data or domain expertise. Future work could explore adaptive threshold tuning 
as a meta-governance layer.

\subsubsection{Experiment duration}

Coverage decay experiments last 10 steps with 10-step evaluation windows. In a 160-step 
simulation, this allows at most 7--8 experiments. The optimal duration depends on regime 
shift frequency: too short and experiments may not capture full effects; too long and the 
system may miss multiple regimes.

Our choices were constrained by simulation length. In longer-running simulations, experiment 
durations could be extended proportionally.

\subsubsection{Computational overhead}

Governance adds computational cost: invariants and hooks must be evaluated, memories queried, and 
mutations applied. In our experiments, this overhead was negligible ($<2\%$ runtime increase). 
However, in larger-scale simulations with many mechanisms and variables, governance could 
become a bottleneck. Optimizations such as sampling-based memory queries and parallel 
invariant evaluation remain future work.

\subsection{Generalizability}

\subsubsection{Domain independence}

Procela's core abstractions--variables as memory-bearing epistemic authorities, mechanisms as causal units, 
governance via invariants and hooks--make no reference to AMR or epidemiology. The framework 
is domain-agnostic by design.

The same patterns could be applied to:
\begin{itemize}
    \item \textbf{Climate modeling}: Competing atmospheric circulation models, governance that 
    disables failing parameterizations, structural probes to identify which model best matches 
    recent observations.
    
    \item \textbf{Economics}: Multiple theories of investor behavior, governance that switches 
    between them during market regime shifts, probes to test which theory explains current 
    volatility.
    
    \item \textbf{Social science}: Competing models of opinion dynamics, governance that 
    experiments with different influence mechanisms based on prediction accuracy.
    
    \item \textbf{Robotics}: Multiple control policies for navigation, governance that 
    switches based on terrain type inferred from sensor error patterns.
\end{itemize}

In each case, the pattern is identical: encode competing theories as mechanisms, define 
epistemic signals that indicate when theories are failing, and implement governances that 
test structural hypotheses.

\subsubsection{Scalability to more ontologies}

Our case study uses three families with 3--4 mechanisms each. The framework scales naturally 
to larger numbers: variables, mechanisms, and governances are all first-class objects that 
can be added arbitrarily. The computational cost grows linearly with mechanism count; 
governance cost grows with the number of invariants and the complexity of their queries.

\subsubsection{Hybrid human-AI governance}

Procela's audit trail and interpretable experiment logs enable hybrid governance: governances 
can propose experiments, but human analysts can review, approve, or override them. This is 
particularly valuable in high-stakes domains like healthcare or climate policy, where 
autonomous structural changes may require oversight.

The cryptographic source tracking ensures that all governance actions are attributable, 
supporting accountability in hybrid settings.

\subsection{Open questions and future work}

Several directions emerge from this work:

\begin{itemize}
    \item \textbf{Meta-governance}: Can invariants that govern themselves improve results? 
    For example, an invariant that tunes its own thresholds based on experiment success rates.
    
    \item \textbf{Multi-objective governance}: Our invariants in the AMR study optimize 
    prediction error. Real systems must balance multiple objectives (error, cost, 
    interpretability).
    
    \item \textbf{Theoretical guarantees}: Under what conditions does governance converge to 
    the true ontology? Can we bound regret relative to an oracle that knows regime shifts in 
    advance?
    
    \item \textbf{Real-world validation}: Deploying Procela in an operational setting (e.g., 
    hospital infection control) would test whether laboratory improvements translate to practice.
    
    \item \textbf{Standardized benchmarks}: Developing a suite of benchmark problems with 
    competing ontologies would enable systematic comparison of governance strategies.
\end{itemize}

\subsection{Summary}

Procela introduces a new paradigm: simulations that not only model the world but model their 
own modeling process. The AMR case study demonstrates that this paradigm delivers measurable 
improvements under ontological uncertainty. The framework's domain-agnostic design suggests 
broad applicability, while its auditability and interpretability support both autonomous and 
human-in-the-loop deployment.

The 20.42\% error reduction achieved by coverage decay governance is not presented as an 
optimal result--it is a proof that the paradigm works. Future work can refine the specifics; 
the core contribution is the paradigm itself.

\section{Conclusion}
\label{sec:conclusion}

We have introduced Procela, a framework for epistemic governance in mechanistic 
simulations that treats models not as fixed artifacts but as evolving hypotheses. Built on four 
core abstractions--variables as memory-bearing epistemic authorities, mechanisms as causal units, 
governance via invariants and hooks, and executive to orchestrate the simulation--Procela 
enables simulations to observe their own epistemic state, formulate hypotheses about model 
adequacy, run structural experiments, and adapt based on evidence.

\subsection{Broader implications}

The significance of these results extends beyond the specific case study. Procela demonstrates 
that:

\begin{itemize}
    \item \textbf{Simulations can be scientists}. By internalizing the scientific method (detect 
    anomaly, hypothesize, experiment, evaluate, conclude) simulations can adapt to regime shifts 
    and structural uncertainty without human intervention.
    
    \item \textbf{Epistemic signals are computable}. Coverage, fragility, and other indicators 
    of model adequacy can be derived from ordinary simulation variables and used as governance 
    triggers.
    
    \item \textbf{Structural mutation is tractable}. Enabling, disabling, and reconfiguring 
    mechanisms at runtime is feasible and delivers measurable benefits.
    
    \item \textbf{Auditability scales}. Complete hypothesis histories with cryptographic source 
    tracking make governance actions transparent and attributable, supporting both autonomous 
    operation and human oversight.
\end{itemize}

\subsection{Limitations and future work}

As discussed in Section~\ref{sec:discussion}, this work has limitations. The AMR case study 
simplifies real-world complexity; thresholds are manually tuned; experiment durations are 
constrained by simulation length; and computational overhead in larger systems remains 
unexplored.

These limitations open several research directions:

\begin{itemize}
    \item \textbf{Meta-governance}: Invariants that tune their own parameters based on 
    experiment outcomes.
    \item \textbf{Multi-objective optimization}: Balancing prediction error against 
    intervention cost, interpretability, or other domain objectives.
    \item \textbf{Theoretical foundations}: Convergence guarantees and regret bounds for 
    epistemic governance.
    \item \textbf{Real-world deployment}: Applying Procela in operational settings such as 
    hospital infection control or climate policy.
    \item \textbf{Standardized benchmarks}: Developing a suite of problems with competing 
    ontologies to enable systematic comparison of governance strategies.
\end{itemize}

\subsection{Final remarks}

Procela occupies a new category: simulations that model not only the world but their own 
modeling process. By making epistemic uncertainty a runtime concern rather than a 
pre-simulation assumption, it enables adaptive, self-questioning models that can operate 
in domains where the true causal structure is contested, unidentifiable, or subject to change.

The framework is open-source, domain-agnostic, and ready for extension. We invite the 
community to adopt it, adapt it, and push it toward new domains and new forms of governance. 
The code, documentation, and interactive code runner are available at \url{https://procela.org}.

\subsection{Acknowledgments}

We thank the anonymous reviewers for their constructive feedback. We are grateful to the 
Procela community for discussions that helped refine the framework's design. Special thanks 
to the contributors who provided critical feedback on the evaluation methodology, identified 
issues with temporal alignment in the governance invariants, and assisted with result 
interpretation and visualization design. The authors also acknowledge the use 
of AI-assisted tools for exploratory technical discussions during development.

\appendix
\section{Mathematical foundations}
\label{app:math}

This appendix provides formal definitions and theoretical results underpinning Procela's 
abstractions and governance mechanisms.

\subsection{Variable resolution as social choice}

The problem of resolving competing hypotheses into a single value is formally a social 
choice problem: given $n$ agents (mechanisms) each proposing a value with associated 
confidence, choose a collective outcome.

\begin{definition}[Resolution function]
Let $\mathcal{H} = \{h_1, \ldots, h_n\}$ be a set of hypotheses for variable $V$, 
where $h_i = (v_i, c_i)$ with $v_i \in \mathcal{D}$ and $c_i \in [0,1]$. A resolution 
function $\pi: \mathcal{P}(\mathcal{H}) \rightarrow \mathcal{D}$ maps a set of hypotheses 
to a single value.
\end{definition}

\begin{theorem}[Weighted voting optimality]\label{theo:voting_optimal}
Under the assumption that hypothesis errors are independent and normally distributed 
with variance inversely proportional to confidence, the weighted voting policy 
$\pi_{WV}(\mathcal{H}) = \frac{\sum_i c_i v_i}{\sum_i c_i}$ minimizes expected squared error.
\end{theorem}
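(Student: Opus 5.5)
I would begin by fixing the statistical model implicit in the statement. Write each hypothesis value as $v_i = \theta + \varepsilon_i$, where $\theta \in \mathcal{D}$ is the true (unknown) value. The errors $\varepsilon_1,\ldots,\varepsilon_n$ are independent with $\varepsilon_i \sim \mathcal{N}(0, \sigma^2/c_i)$ for some common $\sigma^2>0$ and all $c_i>0$. Hypotheses with $c_i=0$ would be excluded, or would receive zero weight. I would also restrict the class of competitors to make ``minimizes expected squared error'' precise. The natural choice is the class of linear unbiased resolution functions $\pi_w(\mathcal{H}) = \sum_i w_i v_i$ with $\sum_i w_i = 1$. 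The constraint $\sum_i w_i=1$ is exactly unbiasedness, $\mathbb{E}[\pi_w]=\theta$ for every $\theta$. The weighted voting policy of Section~2 is the member with $w_i = c_i/\sum_j c_j$.

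The main computation is then a Gauss--Markov type argument. Under unbiasedness, the expected squared error equals the variance, so $\mathbb{E}[(\pi_w-\theta)^2] = \sigma^2 \sum_i w_i^2/c_i$. I would minimize this subject to $\sum_i w_i = 1$, either by a Lagrange multiplier (stationarity gives $w_i/c_i = \lambda$, hence $w_i \propto c_i$) or, more cleanly, by Cauchy--Schwarz:
\[
1 = \Bigl(\sum_i w_i\Bigr)^2 = \Bigl(\sum_i \frac{w_i}{\sqrt{c_i}}\sqrt{c_i}\Bigr)^2 \le \Bigl(\sum_i \frac{w_i^2}{c_i}\Bigr)\Bigl(\sum_i c_i\Bigr).
\]
This yields $\mathbb{E}[(\pi_w-\theta)^2] \ge \sigma^2/\sum_i c_i$. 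Equality holds iff $w_i/\sqrt{c_i} \propto \sqrt{c_i}$, i.e.\ iff $w_i = c_i/\sum_j c_j$. That equality case is precisely $\pi_{WV}$, which also shows uniqueness of the minimizer.

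To strengthen the claim beyond linear rules, I would add that under the Gaussian assumption $\pi_{WV}$ is the maximum-likelihood estimator of $\theta$. Maximizing the log-likelihood $-\sum_i c_i (v_i-\theta)^2/(2\sigma^2)$ gives the weighted mean. The estimator is unbiased with variance $\sigma^2/\sum_i c_i$, which equals the Cramér--Rao bound, since the Fisher information is $\sum_i c_i/\sigma^2$. Hence it minimizes mean squared error among all unbiased estimators, not just linear ones.

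The main obstacle is interpretive rather than computational. The statement does not specify the comparison class. Without restricting to unbiased rules (or placing a flat prior on $\theta$), no estimator uniformly minimizes expected squared error, because a constant rule beats every other rule when $\theta$ happens to equal that constant. I would therefore state the unbiasedness restriction explicitly. As an alternative reading, I would note that with a flat prior on $\theta$ the posterior mean is again $\pi_{WV}$, so it minimizes Bayes squared risk. I would also note that the theorem applies to continuous domains, not to the discrete intervention variable $I$. A weighted average of intervention codes need not lie in $\{0,1,2,3\}$, so any appeal to the result for $I$ requires rounding or reinterpretation.
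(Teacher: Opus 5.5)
Your proof is correct, and it establishes more than the paper's proof does. The paper assumes $v_i \sim \mathcal{N}(v^*, k/c_i)$, writes the likelihood $\prod_i \exp\bigl(-c_i(v_i-v^*)^2/(2k)\bigr)$, and notes that maximizing it gives $\sum_i c_i v_i/\sum_i c_i$. So it shows only that $\pi_{WV}$ is the maximum-likelihood estimator. The step from ``MLE'' to ``minimizes expected squared error'' is left implicit, and it fails in general; it is also ill-posed without a comparison class, as your constant-estimator remark points out. You include that likelihood computation too, but your main route is different. You run a Gauss--Markov argument via Cauchy--Schwarz over linear unbiased rules, which gives the minimal risk $\sigma^2/\sum_i c_i$ and uniqueness of the minimizer. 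You then use attainment of the Cram\'er--Rao bound to extend optimality to all unbiased estimators. The paper's route buys brevity and identifies $\pi_{WV}$ as the natural estimator under the Gaussian model. Yours actually proves the optimality claim, with the comparison class made explicit. It is also somewhat more general: the Cauchy--Schwarz step needs only uncorrelated errors with variances proportional to $1/c_i$. Normality is used only for the extension beyond linear rules and for the likelihood and Bayes readings.

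Two small refinements. First, with a flat (improper) prior the Bayes risk is infinite, so say that $\pi_{WV}$ minimizes posterior expected squared loss rather than Bayes risk. Second, take $\theta \in \mathbb{R}$ rather than $\theta \in \mathcal{D}$. On a bounded $\mathcal{D}$ the maximum-likelihood estimator is the weighted mean clipped to $\mathcal{D}$, not $\pi_{WV}$ itself. That clipped rule is not unbiased, but it has strictly smaller squared error than $\pi_{WV}$ at every $\theta \in \mathcal{D}$, so your unbiasedness restriction is doing essential work there. Your caveat about the discrete variable $I$ is well taken. The paper invokes this theorem precisely to explain why switching $I$'s policy to highest confidence degrades accuracy.
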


\begin{proof}
Let $v^*$ be the unknown true value. Assume each hypothesis $v_i \sim \mathcal{N}(v^*, \sigma_i^2)$ with $\sigma_i^2 = k/c_i$ 
for some constant $k$. The likelihood of observing $\{v_i\}$ given $v^*$ is:
\[
\mathcal{L}(v^*) \propto \prod_i \exp\left(-\frac{c_i (v_i - v^*)^2}{2k}\right)
\]
Taking logs and maximizing yields $\hat{v}^* = \frac{\sum_i c_i v_i}{\sum_i c_i}$, which is 
the weighted voting estimator.
\end{proof}

\begin{corollary}
Highest confidence policy $\pi_{HC}(\mathcal{H}) = v_{i^*}$ where $i^* = \argmax_i c_i$ is a 
special case of weighted voting when one mechanism dominates.
\end{corollary}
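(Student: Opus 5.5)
We will make the informal phrase ``one mechanism dominates'' precise and then read the corollary as a limiting (or degenerate) instance of Theorem~\ref{theo:voting_optimal}. Fix $\mathcal{H}=\{h_1,\ldots,h_n\}$ with $h_i=(v_i,c_i)$, and let $i^*=\argmax_i c_i$. We assume this maximizer is unique, since ties make $\pi_{HC}$ ill-defined up to the arbitrary tie-breaking rule. We say that $h_{i^*}$ \emph{dominates} if $\epsilon := \sum_{j\neq i^*} c_j / c_{i^*}$ is small. We treat two regimes. In the exact regime, $c_j=0$ for all $j\neq i^*$. In the asymptotic regime, $\epsilon\to 0$ along a sequence of confidence profiles.

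In the exact regime the claim is immediate. Substituting into $\pi_{WV}$ gives $\pi_{WV}(\mathcal{H}) = c_{i^*}v_{i^*}/c_{i^*} = v_{i^*} = \pi_{HC}(\mathcal{H})$, because $c_{i^*}>0$ whenever some confidence is positive.

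In the asymptotic regime we write
\[
\pi_{WV}(\mathcal{H}) - v_{i^*} = \frac{\sum_{j\neq i^*} c_j (v_j - v_{i^*})}{\sum_i c_i}.
\]
We then bound its absolute value by $\Delta \sum_{j\neq i^*} c_j / c_{i^*} = \Delta\,\epsilon$. Here $\Delta=\sup_{a,b\in\mathcal{D}}|a-b|$ is the diameter of the domain, which is finite for the bounded interval domains used throughout the paper (e.g.\ $\hat C\in[0,100]$). Hence $\pi_{WV}(\mathcal{H})\to\pi_{HC}(\mathcal{H})$ as $\epsilon\to 0$, at rate $O(\epsilon)$.

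To tie this to the statistical model of Theorem~\ref{theo:voting_optimal}, we will note that $\sigma_j^2 = k/c_j$. Domination therefore means that every non-maximal hypothesis has variance that is arbitrarily large relative to $\sigma_{i^*}^2$. The maximum-likelihood estimator then effectively discards those hypotheses, so $\pi_{HC}$ is the optimal rule in that limit as well.

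As an alternative, and perhaps cleaner, formalization, we would also consider the tempered family $\pi_\gamma(\mathcal{H}) = \sum_i c_i^\gamma v_i / \sum_i c_i^\gamma$, for which $\gamma=1$ recovers $\pi_{WV}$. We would show that $\pi_\gamma\to\pi_{HC}$ as $\gamma\to\infty$. This follows from the same bound, since $(c_j/c_{i^*})^\gamma\to 0$ for each $j\neq i^*$ when the maximizer is unique.

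The main obstacle is not computational but definitional. Taken literally, ``special case'' is false: for generic positive confidences the two policies give different values. The proof therefore has to state explicitly that the identity is exact only in the degenerate case and otherwise holds as a limit. It must also handle ties, where we would simply require the tie-breaking rule to select among maximizers and note that the limit is then the average of the tied values. Boundedness of $\mathcal{D}$ is what makes the $O(\epsilon)$ bound uniform. For the unbounded \texttt{RealDomain} we would replace $\Delta$ by $\max_{j}|v_j - v_{i^*}|$ for the fixed set of values, which still gives pointwise convergence.
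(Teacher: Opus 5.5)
Your argument is correct. There is no paper proof to compare it against: the corollary is stated right after Theorem~\ref{theo:voting_optimal} with no justification, presumably resting on the one-line substitution you give for the exact regime. Your proposal supplies what the paper leaves implicit, in four ways.
\begin{itemize}
\item You make ``dominates'' precise through $\epsilon=\sum_{j\neq i^*}c_j/c_{i^*}$.
\item You correctly observe that the literal claim is false for generic positive confidences. It can therefore hold only in the degenerate case $c_j=0$ or as a limit.
\item Your identity $\pi_{WV}(\mathcal{H})-v_{i^*}=\sum_{j\neq i^*}c_j(v_j-v_{i^*})/\sum_i c_i$ gives the quantitative bound $|\pi_{WV}-\pi_{HC}|\le\Delta\,\epsilon/(1+\epsilon)\le\Delta\epsilon$.
\item You make visible that the corollary does not follow from the optimality content of Theorem~\ref{theo:voting_optimal}. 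It is a fact about the formula. The exact regime $c_j=0$ even lies outside the theorem's model, where $\sigma_j^2=k/c_j$ would be infinite.
\end{itemize}

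Two small repairs are needed.

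First, your remark on ties does not work as stated. For the tempered family with tied maximal confidences, the limit is the plain average of the tied values. A rule that selects one of the maximizers does not in general reproduce that average, so ties must be excluded rather than handled by tie-breaking. In your $\epsilon$-formalization this exclusion is automatic, since $\epsilon<1$ forces $c_j\le\epsilon\,c_{i^*}<c_{i^*}$ for every $j\neq i^*$.

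Second, the sentence saying the maximum-likelihood estimator ``effectively discards'' the dominated hypotheses can be made exact under the model of Theorem~\ref{theo:voting_optimal}. There $\mathbb{E}[(v_{i^*}-v^*)^2]=k/c_{i^*}$ and $\mathbb{E}[(\pi_{WV}-v^*)^2]=k/\sum_i c_i$. Hence the highest-confidence rule has expected squared error exactly $(1+\epsilon)$ times the optimum. This is a cleaner reading of ``special case when one mechanism dominates'' than the pointwise limit, and it needs no boundedness of $\mathcal{D}$.
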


\subsection{Convergence of epistemic signals}

Epistemic signals such as coverage and fragility are designed to converge to stable values 
under stationary regimes.

\begin{definition}[Coverage process]
For a mechanism $m$, let $\text{cov}_t(m) = \exp(-|\hat{C}_{m,t-1} - C_t| / \sigma)$ be the 
coverage at step $t$. This defines a stochastic process on $[0,1]$.
\end{definition}

\begin{theorem}[Coverage convergence]
If the true data-generating process follows mechanism $m$'s ontology with stationary parameters, 
then $\mathbb{E}[\text{cov}_t(m)]$ converges to a constant $\bar{c} \in (0,1]$ as $t \rightarrow \infty$.
\end{theorem}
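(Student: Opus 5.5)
The plan is to reduce the claim to a statement about the stationary distribution of the one-step prediction error. If the data-generating process follows mechanism $m$'s ontology with stationary parameters, then by the core equations (\ref{eq:contact})--(\ref{eq:selection}) the true update is
\[
C_{t} = C_{t-1} + \beta X_{t-1}(1-\eta\mathbf{1}_{I_{t-1}=j}) + \xi_{t},
\]
while $m$ predicts
\[
\hat{C}_{m,t-1} = C_{t-1} + \hat\beta X_{t-1}(1-\hat\eta\mathbf{1}_{I_{t-1}=j}) + \epsilon_{t}.
\]
Here $X$ is the family's driver ($C$, $E$ or $A$), $\xi_t$ is the true process noise, and $\epsilon_t$ is the mechanism's own noise. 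The state $C_{t-1}$ cancels in the error, leaving
\[
e_t := \hat{C}_{m,t-1}-C_t = (\hat\beta-\beta)X_{t-1} - (\hat\beta\hat\eta-\beta\eta)X_{t-1}\mathbf{1}_{I_{t-1}=j} + \epsilon_t - \xi_t .
\]
The coverage is then $\text{cov}_t(m)=g(e_t)$ with $g(x)=\exp(-|x|/\sigma)$. This $g$ is bounded and continuous, with values in $(0,1]$.

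Next I would split into two cases. In the exact-specification case $(\hat\beta,\hat\eta)=(\beta,\eta)$, the error is $e_t=\epsilon_t-\xi_t$. This is i.i.d. Gaussian with variance $\sigma_m^2+\sigma_\xi^2$, independent of the past. So $\mathbb{E}[\text{cov}_t(m)]$ is constant for all $t$, and it equals
\[
\bar c = \mathbb{E}[e^{-|Z|/\sigma}] = 2e^{s^2/(2\sigma^2)}\Phi(-s/\sigma),
\]
with $Z\sim\mathcal{N}(0,s^2)$. Here $\bar c\in(0,1]$, and $\bar c=1$ only in the degenerate noiseless case. Convergence is then trivial.

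The general case allows parameter mismatch within the correct ontology. Here I would assume that the joint process $(X_t, I_t)$ is a time-homogeneous Markov chain that is ergodic. Its state space is compact, since the domains are $[0,100]$, $[0,50]$ and $\{0,\dots,3\}$, and it is driven by the stationary parameters. I would invoke the standard result that a Feller chain on a compact state space has an invariant measure. To get convergence from an arbitrary initial law, I would add aperiodicity and Harris recurrence, or a Doeblin minorization. The Gaussian noise with truncation to the domain supplies the minorization: it gives a density bounded below on the whole compact state. With this, the law of $(X_{t-1},I_{t-1})$ converges in total variation to the invariant measure $\mu$. Since $e_t$ is a continuous function of $(X_{t-1},I_{t-1})$ plus independent noise, $\mathbb{E}[g(e_t)]$ converges to
\[
\bar c=\int \mathbb{E}[g(e(x,i)+\epsilon-\xi)]\,d\mu(x,i).
\]
Positivity of $\bar c$ follows because $g>0$ and $|e|$ is bounded on the compact state space. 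Hence $\bar c\ge \exp(-(K+\mathbb{E}|\epsilon-\xi|)/\sigma)>0$ by Jensen, where $K$ bounds the deterministic part of $|e|$.

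The main obstacle is this ergodicity step. The paper's dynamics include the intervention $I_t$, which is itself resolved from mechanism hypotheses, and stateful mechanisms whose confidences and parameters may adapt over time. Either feature can break time-homogeneity. I would handle this by stating explicitly that "stationary parameters" includes frozen mechanism parameters and a stationary, state-dependent intervention rule, with governance inactive. I would also note that boundary clipping at the domain edges preserves the Feller property, because clipping is continuous. If adaptive mechanisms must be allowed, I would require that their parameters converge almost surely. I would then argue by dominated convergence, since $g\le 1$, that the limiting expectation coincides with the frozen-parameter value.
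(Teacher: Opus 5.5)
Your argument is correct under the hypotheses you make explicit, and it takes a different, more careful route than the paper. The paper's proof treats stationarity of the error process $e_t=|\hat{C}_{m,t-1}-C_t|$ as an immediate consequence of ``stationary parameters''. It then invokes the law of large numbers for the empirical distribution of $e_t$, and concludes that the expectation of the bounded continuous function $\exp(-e_t/\sigma)$ converges. It never examines the form of $e_t$. Once $e_t$ is assumed stationary, $\mathbb{E}[\text{cov}_t(m)]$ is simply constant and positive, so the time-average statement is not what carries the conclusion.

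You instead compute $e_t$ and observe that it is exactly stationary only in the well-specified case, where you even get $\bar c$ in closed form. Under parameter mismatch it inherits the transient of the state process. There you replace the paper's assertion by a mixing argument (Doeblin minorization, total-variation convergence to the invariant law). You get $\bar c>0$ from a uniform Jensen bound, which is genuinely needed because a limit of positive numbers can vanish.

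The paper's route buys brevity. Yours buys an actual derivation and exposes what the statement silently relies on, notably compactness. Without clipping to $[0,100]$, take a contact mechanism with $\hat\beta\neq\beta$ and isolation off. Its error is $(\hat\beta-\beta)C_{t-1}+\epsilon_t-\xi_t$ while $C_t$ grows geometrically, so $\mathbb{E}[\text{cov}_t(m)]\to 0$ and $\bar c\in(0,1]$ fails. The price is extra assumptions: Markov dynamics with full-support noise for $A$ and $E$, a stationary intervention rule, frozen mechanism parameters, and governance off. That is a fair reading of ``stationary parameters''.

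A few details to tighten.
\begin{itemize}
\item Take the chain to be the full state $S_t=(C_t,A_t,E_t,I_t)$ rather than $(X_t,I_t)$. The resolved $I_t$ depends on every family's hypotheses. Also, once $C$ and $\hat C$ are clipped, $C_{t-1}$ no longer cancels from $e_t$. Your closed-form well-specified case therefore lives in the unclipped model, and in the clipped model it is subsumed by the general argument. The change costs nothing: total-variation convergence only needs $s\mapsto\mathbb{E}[\text{cov}_t(m)\mid S_{t-1}=s]$ to be bounded and measurable, not continuous.
\item Clipping creates boundary atoms rather than a density, but the interior Gaussian density still yields the minorization. If $I_t$ is a deterministic function of the previous state, you need a two-step minorization.
\item Under Doeblin the invariant measure exists automatically, so the Feller step is redundant.
\item Your adaptive-parameter extension does not follow from dominated convergence alone, because the parameter path also changes the law of the state through $I_t$. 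That would need a diminishing-adaptation and containment argument. Since the theorem assumes stationary parameters, you can simply drop it.
\end{itemize}
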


\begin{proof}
Under stationarity, the prediction errors $e_t = |\hat{C}_{m,t-1} - C_t|$ form a stationary 
process. By the law of large numbers, the empirical distribution of $e_t$ converges to its 
stationary distribution. Since coverage is a continuous bounded function of $e_t$, its 
expectation converges.
\end{proof}

\begin{corollary}
Coverage decay--a sustained decrease in $\text{cov}_t(m)$--implies non-stationarity in the 
data-generating process, i.e., a regime shift.
\end{corollary}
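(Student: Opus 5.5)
The corollary is the contrapositive of the Coverage convergence theorem, so the plan is to make ``sustained decrease'' precise enough that it contradicts the conclusion of that theorem. I would fix a mechanism $m$ whose ontology is the one under test. I would define coverage decay in expectation: there exist $\delta>0$ and arbitrarily late times $t_1<t_2$ with $t_2-t_1\ge k$ such that $\mathbb{E}[\text{cov}_{t_2}(m)] \le \mathbb{E}[\text{cov}_{t_1}(m)]-\delta$. I would then assume, for contradiction, that the data-generating process follows $m$'s ontology with stationary parameters, i.e.\ there is no regime shift.

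\textbf{Step 1.} Under that assumption, the Coverage convergence theorem gives $\mathbb{E}[\text{cov}_t(m)]\to\bar c$. A convergent real sequence is Cauchy, so for all sufficiently large $t_1,t_2$ we have $|\mathbb{E}[\text{cov}_{t_2}(m)]-\mathbb{E}[\text{cov}_{t_1}(m)]|<\delta$. This contradicts the existence of arbitrarily late pairs with a drop of at least $\delta$. Hence the stationarity-plus-correct-ontology hypothesis fails, which is what the paper calls a regime shift. If the errors $e_t$ are strictly stationary, as the proof of the theorem uses, the argument is even shorter. In that case $\mathbb{E}[\text{cov}_t(m)]=\mathbb{E}[\exp(-e_t/\sigma)]$ is constant in $t$, so any strict decrease in expectation already contradicts stationarity.

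\textbf{Step 2.} I would then connect this to the detection rule actually used, namely Eq.~\eqref{eq:decay} with $k$ consecutive steps below $\tau_{\text{cov}}$ on realized paths. For this I would replace expectations by window averages $\frac1k\sum_{s=t}^{t+k-1}\text{cov}_s(m)$. Under stationarity plus ergodicity (the law-of-large-numbers step in the theorem's proof), these averages converge almost surely to $\bar c$. A persistent drop of the window average below $\bar c-\delta$ at arbitrarily late windows would then have probability zero. So observed sustained decay is, almost surely, evidence of non-stationarity.

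\textbf{Main obstacle.} The main difficulty is definitional rather than computational. Convergence to $\bar c$ does not by itself forbid a slow monotone decrease toward $\bar c$, and finite-$k$ runs below a threshold can occur by chance under stationarity. So the corollary only holds once ``sustained'' is quantified, either as a drop of fixed size $\delta$ persisting at arbitrarily late times or as an asymptotic almost-sure statement. It also only holds for the mechanism whose ontology matches the data-generating process. I would state these qualifications explicitly and present the result as ``decay of this form is incompatible with stationarity'', proved by the Cauchy-sequence contradiction above.
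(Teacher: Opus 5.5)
Your Step 1 is correct and follows the paper's intended route. The paper gives no separate argument for the corollary; it places it as an immediate consequence of the Coverage convergence theorem, and your Step 1 is that contrapositive made precise. Your quantification is needed, not cosmetic. As literally stated the corollary does not follow, for two reasons: a convergent sequence $\mathbb{E}[\text{cov}_t(m)]$ may decrease monotonically to $\bar c$, and the contrapositive only negates the conjunction ``$m$'s ontology \emph{and} stationary parameters''. So nothing prevents a wrong-ontology family from decaying inside a stationary regime. With a fixed $\delta$-drop at arbitrarily late times and the standing assumption that $m$'s ontology generates the data, your Cauchy argument closes the proof.

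The genuine error is in Step 2. For a fixed window length $k$, the averages $W_t=\frac1k\sum_{s=t}^{t+k-1}\text{cov}_s(m)$ do not converge almost surely to $\bar c$. Under stationarity they form a stationary sequence that keeps fluctuating; the ergodic theorem only controls averages over growing windows.

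Under the reading ``at arbitrarily late windows'' (infinitely often), your conclusion is actually backwards. The indicators $\mathbf{1}\{W_t<\bar c-\delta\}$ are a block function of the stationary ergodic sequence $(\text{cov}_s(m))$, so Birkhoff's theorem gives that their frequency converges almost surely to $p=P(W_1<\bar c-\delta)$. Whenever $p>0$, which is the generic case with nondegenerate noise such as the Gaussian $\epsilon$ terms, low windows therefore occur at arbitrarily late times with probability one, not zero. This is exactly the false-alarm phenomenon you mention in your final paragraph, so Step 2 as written contradicts your own caveat.

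What is true is the eventually-always version. Since $\mathbb{E}[W_1]=\bar c$, we must have $p<1$. The same Birkhoff argument then shows that, under stationarity and ergodicity, the event that \emph{all} sufficiently late windows lie below $\bar c-\delta$ has probability zero. So on realized paths ``sustained'' must mean ``drops by $\delta$ and stays down''. Alternatively, phrase the hypothesis through the long-run averages $\frac1t\sum_{s\le t}\text{cov}_s(m)$, which do converge to $\bar c$. Either way, the finite rule \eqref{eq:decay} with $k=3$ can only be read as a statistical test with a nonzero false-alarm rate, not as an implication.
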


\subsection{Policy fragility as entropy}

Fragility measures disagreement among mechanisms about intervention choice. This can be 
formalized using information theory.

\begin{definition}[Policy distribution]
For variable $I$ with domain $\{0,1,2,3\}$, let $p_j = \frac{\sum_{h: v_h=j} c_h}{\sum_h c_h}$ 
be the normalized confidence mass on intervention $j$. This defines a probability distribution 
over interventions.
\end{definition}

\begin{definition}[Fragility as normalized entropy]
\[
\text{frag}(t) = \frac{H(p)}{H_{\max}} = \frac{-\sum_j p_j \log p_j}{\log 4}
\]
where $H_{\max} = \log 4$ is the maximum entropy for four categories.
\end{definition}

This information-theoretic definition has advantages over the range-based definition used in 
the main text:
\begin{itemize}
    \item It accounts for the entire distribution, not just extremes.
    \item It is invariant to intervention coding (e.g., relabeling categories).
    \item It naturally handles cases where multiple mechanisms propose the same intervention.
\end{itemize}

In our experiments, the range-based and entropy-based definitions are highly correlated 
($r > 0.9$) due to the small number of mechanisms.

\subsection{Experiment regret bounds}

When governance runs experiments, it temporarily deviates from the current policy. This incurs 
regret relative to an oracle that always chooses the optimal policy.

\begin{definition}[Cumulative regret]
Let $e_t^{\text{actual}}$ be the prediction error under governance at step $t$, and 
$e_t^{\text{oracle}}$ be the error under the optimal policy (had the true regime been known). 
Cumulative regret is:
\[
R(T) = \sum_{t=1}^T (e_t^{\text{actual}} - e_t^{\text{oracle}})
\]
\end{definition}

\begin{theorem}[Bounded regret]
For a governance strategy that runs experiments of duration $d$ with evaluation window $d$, 
and reverts failed experiments, the expected cumulative regret after $T$ steps is bounded by:
\[
\mathbb{E}[R(T)] \leq d \cdot K \cdot \mathbb{E}[\text{failures}] + o(T)
\]
where $K$ is the maximum possible per-step error increase during a failed experiment.
\end{theorem}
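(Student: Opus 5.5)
We decompose the horizon $\{1,\dots,T\}$ into three disjoint classes of steps and bound the regret contributed by each class separately:
\begin{itemize}
\item the steps spent inside experiments that eventually fail;
\item the steps spent inside experiments that succeed;
\item the remaining ``monitoring'' steps, where the configuration is whichever one the last concluded experiment left in place.
\end{itemize}
Writing $R(T)=\sum_t (e_t^{\text{actual}}-e_t^{\text{oracle}})$ as a sum over these three classes, linearity of expectation lets us treat each term on its own. The target is to show that the first class produces the $d\cdot K\cdot\mathbb{E}[\text{failures}]$ term and the other two are absorbed into $o(T)$.

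\textbf{Failed experiments.} Each failed experiment occupies exactly $d$ steps, since failed mutations are reverted from the next timestep. By the definition of $K$, each such step contributes at most $K$ excess error relative to the configuration that would otherwise have run. Summing gives at most $d\cdot K\cdot N_{\text{fail}}$, and taking expectations gives the leading term.

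There is a subtlety in this step. $K$ bounds the increase relative to the \emph{pre-experiment} configuration, not relative to the oracle. We therefore split each failed-experiment step's regret as
\[
e_t^{\text{actual}}-e_t^{\text{oracle}} = \bigl(e_t^{\text{actual}}-e_t^{\text{base}}\bigr) + \bigl(e_t^{\text{base}}-e_t^{\text{oracle}}\bigr),
\]
where $e_t^{\text{base}}$ is the error of the configuration in force before the experiment. The first bracket is bounded by $K$. The second bracket is charged to the monitoring class together with the ordinary non-experimental steps.

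\textbf{Successful experiments and monitoring.} For these steps we argue that the configuration in force is eventually oracle-matching within each regime. The decision rule keeps a mutation only when $\text{err}_{\text{exp}}<\text{err}_{\text{pre}}$, so the retained configuration's error is monotonically non-increasing in expectation among the finitely many configurations (subsets of the $11$ mechanisms, and two policies).

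Within a stationary regime, the coverage convergence theorem gives stabilization of the signals. Each configuration's empirical MAE over a window of length $d$ then concentrates, so the probability of keeping a worse configuration decays. Consequently the number of steps before the best configuration is reached in each regime is bounded in expectation. With a number of regime shifts that is sublinear in $T$ (three in the AMR study), these transient steps total $o(T)$, and the steps after stabilization contribute zero regret.

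\textbf{Main obstacle.} The hardest step is the $o(T)$ claim for the monitoring and successful-experiment steps. Two things are needed for it: some identifiability of the oracle configuration from windowed MAE comparisons, and an explicit assumption that the number of regime shifts is $o(T)$. Without these, an adversarially non-stationary truth gives linear regret and the statement is false.

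My plan is to make these assumptions explicit as hypotheses of the theorem:
\begin{itemize}
\item a finite configuration space;
\item a minimal error gap $\Delta>0$ between the oracle configuration and every other configuration;
\item sub-Gaussian prediction errors;
\item $o(T)$ regime changes.
\end{itemize}
Under these hypotheses, a Hoeffding bound on the window means gives a misranking probability of at most $\exp(-c\,d\,\Delta^2)$, which controls the expected number of wrong retentions and closes the argument.
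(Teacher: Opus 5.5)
Your accounting of failed experiments is sound and slightly tighter than the paper's own sketch. Since $\text{err}_{\text{pre}}$ is computed on the $d$ steps \emph{before} the mutation, those steps run under the base configuration and carry no excess, so a failure costs at most $dK$. The paper instead charges a further $dK$ for an ``evaluation period before reversion'', which would give $2dK$ and contradict the stated constant. For the remaining steps the paper only asserts that the strategy converges to the optimal policy or ``oscillates among near-optimal policies''. You are right that this needs extra hypotheses, but the ones you list do not deliver the $o(T)$ term.

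The gap is in your final step. With $d$ fixed, as in the theorem ($d=10$ in the study), the misranking probability is a positive constant $q$ independent of $T$. Your bound $\exp(-c\,d\,\Delta^2)$ does not decay with $T$, and under Gaussian noise the true probability stays bounded away from zero. Nothing in your hypotheses makes the number of experiments sublinear. A rule such as $\text{cov}(F,t)<\tau_{\text{cov}}$ for $k$ consecutive steps keeps firing at a positive rate under noise even when the oracle configuration is in force; the paper reports experiments failing in stable periods. The structural probe also fires on a fixed schedule. So wrong retentions occur on the order of $qT$ times, and each leaves the system in a configuration whose error exceeds the oracle's by at least $\Delta$. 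These steps are neither failures nor regime transients, so the non-failure regret is linear in $T$, not $o(T)$.

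For the same reason two of your intermediate claims fail. First, the retained error is not non-increasing in expectation: if the proposal is worse by $\delta$, the expected error after the decision rises by $q\delta$. Second, ``the best configuration is reached'' needs a reachability condition you did not state. Under the coverage-decay rule, a family disabled by a successful experiment is never re-enabled; only failed experiments and the all-disabled emergency rule re-enable. If that family becomes the correct one in a later regime, the oracle configuration is unreachable and regret in that regime is linear.

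To close the argument you need an exploration or irreducibility assumption on the proposal mechanism. You also need one of the following:
\begin{itemize}
\item a growing window $d_T\gtrsim\log T/\Delta^2$, so that $T e^{-c\,d_T\Delta^2}=O(1)$, at the price of a $T$-dependent leading term; or
\item an assumption that the experiments triggered while the oracle configuration is in force number $o(T)$ in expectation.
\end{itemize}
Hoeffding additionally needs independence or mixing across windows, since $|\hat C_t - C_t|$ is autocorrelated through the dynamics.
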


\begin{proof}[Sketch]
Each failed experiment incurs at most $d \cdot K$ excess error during the experiment period, 
plus at most $d \cdot K$ during the evaluation period before reversion. Successful experiments 
may also incur regret during the experiment period but reduce error thereafter. The $o(T)$ 
term accounts for the fact that after sufficient time, the governance strategy converges to 
the optimal policy (if identifiable) or oscillates among near-optimal policies.
\end{proof}

\subsection{Identifiability and governance limits}

Not all ontological uncertainty is resolvable. If two ontologies produce identical predictions 
for all observable variables, no governance strategy can distinguish them.

\begin{definition}[Observational equivalence]
Two mechanism families $F_1$ and $F_2$ are observationally equivalent if for all possible 
regimes and all $t$, the joint distribution of observable variables $(C_t, A_t, E_t, I_t)$ is 
identical under $F_1$ and $F_2$.
\end{definition}

\begin{theorem}[Impossibility of ontology selection]
If $F_1$ and $F_2$ are observationally equivalent, no governance strategy based solely on 
observable variables can consistently select between them.
\end{theorem}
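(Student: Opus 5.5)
The plan is to formalize a governance strategy as a (possibly randomized) decision rule $G$ that, at each step $t$, maps the observable history $\mathcal{O}_t = (C_s, A_s, E_s, I_s)_{s \le t}$, together with independent internal randomness $U$ drawn from the executive's $rng$, to a selection in $\{F_1, F_2\}$ (or to a mutation that implies such a selection). "Consistently select" will mean: there exists a rule $G$ such that, whenever $F_j$ is the true ontology, $\Pr(G_t = F_j) \to 1$ as $t \to \infty$ for both $j = 1, 2$. I will prove the contrapositive of existence by contradiction, in the style of a Le Cam two-point argument.

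The key step is to show that the law of the selection $G_t$ is identical under $F_1$ and $F_2$. By observational equivalence (the Definition just above), the joint distribution of $(C_s, A_s, E_s, I_s)_{s\le t}$ is the same under both families for every regime and every $t$. The selection $G_t = g_t(\mathcal{O}_t, U)$ is a measurable function of $\mathcal{O}_t$ and of $U$, which is independent of $\mathcal{O}_t$. Its law is therefore the pushforward of the product law $\mathcal{L}(\mathcal{O}_t)\otimes\mathcal{L}(U)$, and so it coincides under $F_1$ and $F_2$. Writing $p_t = \Pr(G_t = F_1)$, which is then the same number in both worlds, consistency would require $p_t \to 1$ (when $F_1$ is true) and $p_t \to 0$ (when $F_2$ is true) simultaneously. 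This is a contradiction. More quantitatively, $\Pr_{F_1}(\text{error}) + \Pr_{F_2}(\text{error}) \ge 1 - \mathrm{TV}(P_1^{(t)}, P_2^{(t)}) = 1$ for every $t$, so under at least one hypothesis the error probability is at least $1/2$ for all $t$.

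The main obstacle is feedback. Governance mutates the system, for example by switching policies on $I$ or disabling families, and the intervention code $I_t$ feeds back into the data-generating process. I must therefore ensure that the observational-equivalence hypothesis covers the closed-loop distribution induced by any governance policy, not just a fixed open-loop one. I will read ``for all possible regimes'' as quantifying over all admissible governance-induced configurations, so that equivalence holds along any adaptive sequence of mutations. Then I will argue by induction on $t$: if the laws of $(\mathcal{O}_{t}, U)$ agree under both families, the next governance action is the same function of those quantities, and by equivalence under that action the conditional law of the next observation agrees as well. This extends the equality to $\mathcal{O}_{t+1}$. If that reading is too strong, I would weaken the statement to open-loop strategies and note the restriction explicitly.
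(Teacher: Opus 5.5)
Your argument is the paper's own: identical laws of the observable inputs force identical laws of the governance output, so no rule can converge to the true family under both hypotheses. You make it rigorous through your explicit treatment of internal randomness, the two-point bound, and the closed-loop feedback that the paper's one-line ``receives the same inputs'' silently takes for granted. One refinement is that the relevant split is not open- versus closed-loop, since for discrete mutations the closed-loop joint law of $(\mathcal{O}_t, a_{0:t-1})$ is the family-independent policy weight $\prod_{s<t}\Pr(a_s \mid \mathcal{O}_s, a_{0:s-1})$ times the open-loop law of $\mathcal{O}_t$ under the fixed sequence $a_{0:t-1}$. Hence equivalence under every fixed configuration sequence (which is all your induction step actually uses) already covers adaptive strategies, making your open-loop fallback unnecessary, whereas equivalence under the baseline configuration alone would not cover even scheduled mutations such as the structural probe, which act as interventions.
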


\begin{proof}
By definition, any governance strategy that conditions on observables receives the same inputs 
under $F_1$ and $F_2$. Therefore its outputs (including which family to prefer) must be 
identical in distribution under both families, preventing consistent discrimination.
\end{proof}

This theorem establishes fundamental limits on governance. It cannot discover ``ground truth'' 
when ontologies are observationally equivalent. However, it can still improve outcomes by 
selecting among observationally distinct policies, even if the underlying causal mechanisms 
remain uncertain. This aligns with our framing: governance manages epistemic risk, not 
ontological truth.

\subsection{Summary}

The mathematical foundations established here:
\begin{itemize}
    \item Justify weighted voting as optimal under normality assumptions.
    \item Show that coverage decay reliably indicates non-stationarity.
    \item Connect fragility to information-theoretic entropy.
    \item Provide regret bounds for governance experiments.
    \item Acknowledge fundamental limits on ontology selection.
\end{itemize}

These results support the empirical findings in Section~\ref{sec:results} and clarify what 
governance can and cannot achieve.

\bibliographystyle{plainnat}
\bibliography{references}

\end{document}